\documentclass[8pt,reqno]{amsart}
\usepackage{latexsym}
\usepackage{amssymb}
\usepackage{amsmath}
\usepackage{amsthm}
\usepackage{amsfonts}
\usepackage{amssymb, epic, xypic} 
\usepackage{graphicx}
\usepackage{booktabs}

\setlength{\textheight}{27pc} \oddsidemargin-0.5truecm
\evensidemargin-0.5truecm \textwidth17truecm \textheight22truecm
\topmargin-.1truecm
\numberwithin{equation}{section}

\newcommand{\diag}{{\rm diag}}
\newcommand{\jj}{\mathfrak{J}}
\newcommand{\hh}{\mathbb{H}}
\newcommand{\g}{\mathfrak{g}}
\newcommand{\h}{\mathfrak{u}}
\newcommand{\p}{\mathfrak{p}}
\newcommand{\lie}{{\rm Lie}}
\newcommand{\RR}{\mathbb R}
\newcommand{\dd}{{\rm dim}}
\newcommand{\tr}{{\rm tr}}
\newcommand{\Oct}{\mathbb{O}}
\newcommand{\cc}{\mathbb{C}}
\newcommand{\rr}{\mathbb{R}}
\newcommand{\zz}{\mathbb{Z}}

\DeclareMathOperator{\SL}{SL}
\DeclareMathOperator{\sla}{sl}
\DeclareMathOperator{\SO}{SO}

\DeclareMathOperator{\SU}{SU}
\DeclareMathOperator{\su}{su}
\DeclareMathOperator{\trace}{Tr}
\DeclareMathOperator{\Ad}{Ad}

\DeclareMathOperator{\Aut}{Aut}

\newcommand{\transp}[1]{{}^t\!#1}
\newcommand{\smaq}{\left[ \begin{smallmatrix}}
\newcommand{\smat}{\left( \begin{smallmatrix}}
\newcommand{\smcq}{\end{smallmatrix}\right]}
\newcommand{\smct}{\end{smallmatrix}\right)}
\newcommand{\smag}{\left \{ \begin{smallmatrix}}
\newcommand{\smcg}{\end{smallmatrix}\right \}}

\newcommand{\oo}{0_{27}}
\newcommand{\ov}{\vec 0_{27}}

\newcommand{\be}{\begin{equation}}
\newcommand{\ee}{\end{equation}}
\newcommand{\bes}{\begin{equation*}}
\newcommand{\ees}{\end{equation*}}

\newcommand{\eqn}{\begin{eqnarray}}
\newcommand{\feqn}{\end{eqnarray}}
\newcommand{\eqnn}{\begin{eqnarray*}}
\newcommand{\feqnn}{\end{eqnarray*}}


\newtheorem{prop}{Proposition}

\begin{document}

\title[$E_7$ groups from octonionic magic square]{\boldmath{$E_7$} groups from octonionic magic square}
\author{Sergio L. Cacciatori}
\author{Francesco Dalla Piazza}
\address{
Dipartimento di Fisica e Matematica, Universit\`a degli Studi dell'Insubria,
Via Valleggio 11, 22100 Como, Italy, and INFN, via Celoria 16, 20133 Milano, Italy}
\email{sergio.cacciatori@uninsubria.it}
\email{f.dallapiazza@uninsubria.it}

\author{Antonio Scotti}
\address{Dipartimento di Matematica,
Universit\`a degli Studi di Milano,  Via Saldini 50, 20133 Milano, Italy}
\email{ascotti@mindspring.com}


\begin{abstract}
In this paper we continue our program, started in \cite{Cacciatori:2005yb}, of building up explicit
generalized Euler angle parameterizations for all exceptional compact Lie groups. Here we solve the problem for $E_7$, by first
providing explicit matrix realizations of the Tits construction of a Magic Square product between the exceptional octonionic algebra $\jj$ and the quaternionic
algebra $\hh$, both in the adjoint and the $56$ dimensional representations. Then, we provide the Euler parametrization of $E_7$ starting from
its maximal subgroup $U=(E_6\times U(1))/\mathbb{Z}_3$. Next, we give the constructions for all the other maximal compact subgroups.

\end{abstract}

\maketitle

\section{Introduction}
Simple Lie groups are well understood, and find application in a huge number of mathematical and physical
fields. In particular,  the role of compact exceptional Lie groups in grand unification gauge
theories and in string theories, and that of the noncompact forms in supergravity theories is well known.
Many properties of these groups can be already inferred from abstract
theoretical considerations, however it can be useful to have available  explicit concrete realizations of such groups in term of matrices,
for testing conjectures
related to confinement in non abelian gauge theories, doing explicit nonperturbative computations in exceptional lattice GUT theories
and in random matrix theories. Beyond these, there are other useful applications in physics or mathematical physics of an explicit matrix realization of
the $E_6$, $E_7$ and $E_8$ exceptional Lie groups. For example: sigma models based on exceptional Lie group quotients are of interest for string theory
and conformal field theory applications; the study of the properties of the magnetic material Cobalt Niobate is also based on exceptional Lie groups of
type $E$ \cite{cobalt}.  The connection to explicit realizations and special function theory would permit to perform calculations
of matrix elements. These applications are also directly interesting in integrable models.
In particular there is a specific motivations from physics to be interested to $E_7$ among all exceptional groups:
recently a strict relation between cryptography and black hole physics based on  E7 exceptional supergravity has been discovered.
However, actual computation of entangled expectation values require again explicit determination of the Haar measure and range of parameters.
Moreover, the most general structure of the attractor mechanism of black holes in $N=2$, four dimensional supergravity arises in $E_7$ exceptional supergravity.
The orbit of the U-duality group are only partially known just because a suitable explicit realization of the group $E_7$ is lacking.

\

In this paper we will focus on the compact form of $E_7$. In this case
the main difficulty consists in finding a realization admitting a simple
characterization of the range of parameters.
A way, mainly inspired by \cite{Tilma} and \cite{Sudarshan},
to solve this problem has been introduced in \cite{Cacciatori:2005yb} for the exceptional
Lie group $G_2$ (see also \cite{Cacciatori:2005gi}), then applied to the $SU(N)$ groups in \cite{Bertini:2005rc}
and to the exceptional Lie groups $F_4$ and $E_6$ in \cite{F4} and \cite{E6} respectively. In this
paper we continue our program of building up the generalized Euler parametrization for all the
exceptional Lie groups. There are many possible realizations of the Euler parametrization on a given group (see \cite{40} for a review),
depending on the choice of the maximal subgroup one starts with. In any case, the simplest one is that based on
the higher dimensional compact subgroup. For $E_7$, this is the group $U=(E_6\times U(1))/\mathbb{Z}_3$, where $\mathbb{Z}_3$
is the center of $E_6$, which is indeed the first case we consider here. The other possible maximal compact subgroups of $E_7$ are
$SU(8)/\mathbb{Z}_2$ and $({\rm Spin}(12)\times SU(2))/(\mathbb{Z}_2 \times \mathbb{Z}_2)$ associated to the real forms $E_{7(7)}$ and
$E_{7(-5)}$ respectively. We will provide a construction of the compact real form for each choice of the maximal compact subgroup. Indeed, in practical
applications it can be necessary to select a specific subgroup among the others. Moreover, in other applications, as for example
in extended supergravities, it becomes important to determine the various real forms and the corresponding symmetric spaces. In our
constructions these can be easily realized by applying the unitary Weyl trick.

\

The plan of the paper is as follows. In Sec. \ref{sec:e7} we present the main features of the Lie group $E_7$.
In Sec. \ref{sec:tits} we review the Tits construction for the Lie
algebra of $E_7$. In particular we derive from it the main properties we will need to apply the generalized
Euler angles method. We will present both the adjoint representation ${\bf 133}$ and the smallest fundamental representation ${\bf 56}$.
We build up the group, presenting a very careful exposition of details in the appendices.\\
In Sec. \ref{sec:E7(7)} we give a second construction associated to the split form $E_{7(7)}$, and in Sec. \ref{sec:E7(-5)} a third
construction based on the subgroup $({\rm Spin}(12)\times SU(2))/(\mathbb{Z}_2 \times \mathbb{Z}_2)$.
In these two cases we will not present all necessary checks, as, for example, the explicit computation of the volumes and the related
integrals, but we limit ourselves to indicate the main steps. Indeed, the lacking details can be included in a more general framework
which deserves to be presented apart \cite{to-appear}.
\subsection*{Remark}
In www.dfm.uninsubria.it/E7/ one can find the Mathematica programs providing the constructions of the matrix realizations
of the $\bf{133}$ and $\bf{56}$ algebra representations for the first construction, and the $\bf{56}$ for the other ones.

\section{The exceptional Lie group $E_7$} \label{sec:e7}
As a complex Lie algebra, $E_7$ is the unique exceptional Lie algebra of rank $7$. It is characterized by the Dynkin diagram drawn in Fig. \ref{fig}.
\begin{figure}[h]
\centering
\caption{Dynkin diagram for $E_7$}\label{fig}
\begin{picture}(200, 90)(-50,-70)%
\put(-110,10){\makebox(0,0)[l]{{$\alpha_1$({\bf 133})}}}
\put(-60,10){\makebox(0,0)[l]{{$\alpha_3$({\bf 8645})}}}
\put(-10,10){\makebox(0,0)[l]{{$\alpha_4$({\bf 365750})}}}
\put(40,10){\makebox(0,0)[l]{{$\alpha_5$({\bf 27664})}}}
\put(90,10){\makebox(0,0)[l]{{$\alpha_6$({\bf 1539})}}}
\put(140,10){\makebox(0,0)[l]{{$\alpha_7$({\bf 56})}}}
\put(-20,-60){\makebox(0,0)[l]{{$\alpha_2$({\bf 912})}}}
\put(-100,0){\circle*{6}}
\put(-50,0){\circle*{6}}
\put(0,0){\circle*{6}}
\put(50,0){\circle*{6}}
\put(100,0){\circle*{6}}
\put(150,0){\circle*{6}}
\put(0,-50){\circle*{6}}
\thicklines
\drawline(-100,0)(150,0)
\drawline(0,0)(0,-50)
\end{picture}
\end{figure}
Recall that to each dot corresponds a simple root $\alpha_i$. These are free generators of the root lattice $\Lambda_R=\sum_i \mathbb{Z} \alpha_i$. The
space $H^*=\Lambda_R \otimes \mathbb{R}$ is endowed with a positive definite inner product $(|)$. The weight lattice $\Lambda_W$ is the dual of $\Lambda_R$
w.r.t. the hooked product, which means that it is freely generated over $\mathbb{Z}$ by the fundamental weights $\lambda^i\in H^*$, $i=1,\ldots,7$
defined by $\langle \alpha_i , \lambda^j \rangle=\delta_i^j$, with
\begin{equation}
  \langle \alpha, \lambda \rangle := 2 \frac {(\alpha|\lambda)}{(\alpha|\alpha)}.
\end{equation}
To each fundamental weight corresponds univocally a fundamental representation, and all irreducible finite dimensional representations can be generated
from the fundamental ones. In Fig. \ref{fig} the fundamental representations are indicated in parenthesis. Here, we will deal with the two lower dimensional
ones.\\
We are interested in constructing the compact form of the group $E_7$. For this reason it is worth to mention some further facts about both the $E_7$ algebras
and groups. As we said, the complex algebra $E_7$ is completely characterized by the Dynkin diagram, from which one can reconstruct the adjoint representation
that, being faithful, is isomorphic to the algebra itself. This is a $133$ dimensional algebra that, indeed, correspond to the fundamental representation
${\bf 133}$. This algebra contains four distinct real forms. This means that there are four inequivalent ways to select a $133$ dimensional real
subspace of the $266$ dimensional real space underlying the complex algebra $E_7$, in such a way that the selected subspace inherited with the Lie
product is itself a (real) Lie algebra. For each simple Lie algebra $A$ there is a unique simply connected Lie group $G$ (up to isomorphisms) such that
$A$ is the corresponding Lie algebra. The complex Lie group $E_7$ contains a maximal compact subgroup which is a $133$ dimensional real Lie group
whose Lie algebra is then called the compact form and usually indicated\footnote{The Killing form $K$ on a complex Lie algebra
is defined by $K(X,Y):=\tr (ad(X) ad(Y))$ and is non degenerate for a simple algebra and on the corresponding real forms. In
particular, for a compact form it is negative definite on the maximal compact subalgebra, the maximal Lie subalgebra that exponentiated generates
a compact group.} $E_{7(-133)}$, where in parenthesis the signature of the Killing form (number of positive eigenvalues minus
number of the negative ones) is indicated. We will use the same notation to indicate both the group and the algebra.
When referring to the group
we will mean the unique simply connected group.\\
The compact forms are in correspondence with the maximal compact subgroups of $E_{7(-133)}$, the compact subgroups that are not properly contained
in a proper subgroup. There are four of such subgroups and then four real forms, which we collect in table \ref{tab}.
\begin{table}[hbtp]
\begin{center}
\begin{tabular}{ccc}
\toprule
Symbol & Real Form & Maximal compact subgroup  \\
\midrule
$E_{7(-133)}$ & Compact & $E_{7(-133)}$ \\
$E_{7(7)}$ & Split & $SU(8)/\mathbb{Z}_2$ \\
$E_{7(-5)}$ & EVI & $(Spin(12)\times Sp(1))/\mathbb{Z}_2$ \\
$E_{7(-25)}$ & EVII & $(E_{6(-78)}\times U(1))/\mathbb{Z}_2$ \\
\bottomrule
\end{tabular}
\caption{The real forms of $E_7$.}
\label{tab}
\end{center}
\end{table}
\noindent
For a given real form we can write the corresponding Lie algebra as $A=T+P$, where $T$ is the maximal compact subalgebra and $P$ its complement.
From this, one obtains the compact form by using the Weyl unitary trick $P\mapsto iP$. Here we are interested in the compact form only. Nevertheless we
will construct it in three different ways, each one evidencing a different proper maximal subalgebra. In this way the interested reader can
reobtain the corresponding noncompact real form directly by applying the Weyl unitary trick.\\
Finally, we stress that for the group $E_{7(-133)}$ we mean the simply connected group, which is the universal covering group. In general, all Lie groups having
the same (finite dimensional real) Lie algebra are obtained from the universal covering by quotienting with a discrete subgroup of the center. For $E_7$
the center is $\mathbb{Z}_2$. Note that the adjoint representation of the Lie algebra is faithful, however, the same is not true for the Adjoint
representation of the universal cover group $G$, since the center $C$ is just the kernel of the Adjoint map. Then, in general, the Adjoint representation
of the group realizes the group $G/C$ in place of $G$. Instead, a faithful representation is obtained by exponentiating the lowest dimensional
fundamental representation.\footnote{The unique simple Lie group whose lowest dimensional representation coincides with
the adjoint one is the exceptional group $E_8$. Since in this case the center is trivial, our statement remains true.}
In our case this means that we need to exponentiate the ${\bf 56}$. Nevertheless, to get more information in some case we will work out the adjoint
representation also.

\section{$E_{7(-25)}$ construction}\label{sec:tits}
\subsection{The Tits construction and {\bf 133}}\label{sec:tits1}
We start with the construction of the adjoint representation of the algebra. In order to catch the idea, recall first that the exceptional
Lie group $F_{4(-52)}$ can be realized as the group of automorphisms of the exceptional Jordan algebra $\jj_3{\Oct}$.
Its Lie algebra is the Lie algebra of derivations over $\jj:=\jj_3(\Oct)$. Thanks to a result due to Chevalley and Schafer, this algebra can be
extended to an $E_6$ type algebra (more precisely the $E_{6(6)}$ split form) by adding the action of the traceless part
$\jj'$ over $\jj$ itself, naturally given by Jordan multiplication:
\begin{eqnarray}
  E_{6(6)}=D(\jj) \oplus \jj',
\end{eqnarray}
where the symbol $D()$ means ``the linear derivations of''.
To obtain the compact form from the split form one has to apply the Weyl trick to $\jj'$ that is to say that we have to
``complexify'' the Jordan algebra by adding the imaginary part $i\otimes \jj'$. \\
This way to construct the algebra can be summarized by saying that $E_6$ is the Magic Square product between $\jj$ and $\mathbb{C}$.
This can be extended to the realization of
the $E_7$ compact form as the Magic Square product between the exceptional octonionic algebra $\jj$ and the quaternionic
algebra $\hh$, \cite{bart-sud} \cite{tits} \cite{Schaf}. We will refer to it as the Tits construction.
In our language, this means that we need to ``quaternionize'' the Jordan algebra so that the vector space underlying the resulting algebra will be
\eqn
\mathfrak{g}=D(\hh)\oplus D(\jj) \dot + (\hh' \otimes \jj'),\label{nonalg}
\feqn
where $\jj'$ is the set of traceless octonionic Jordan matrices and $\hh'$ are
the imaginary quaternions. Here we use the symbol $\oplus$ to mean direct sum of algebras whereas $\dot +$ is a direct sum of vector
spaces but not of algebras. This is in order to stress that we have not yet extended the Lie product to the last summand.
Thus, $D(\hh)\oplus D(\jj)$ is a subalgebra of $\mathfrak{g}$. Moreover, there is a natural action of
$D(\hh)\oplus D(\jj)$ over $\hh' \otimes \jj'$ which defines the mixed product
\eqn
[(H,J), h\otimes j]=H(h)\otimes j+h \otimes J(j) \qquad\ \forall (H,J)\in D(\hh)\oplus D(\jj), \quad h\otimes j\in \hh' \otimes \jj'.
\feqn
To extend (\ref{nonalg}) to a Lie algebra one must define a product between elements of $\hh' \otimes \jj'$. This requires the introduction of
some notations and properties.
\subsection*{Geometry of quaternions.}
On $\hh$ an inner product is defined $\langle h_1,h_2 \rangle={\rm Re} (\bar h_1 h_2)$, where complex conjugation changes the sign of the
imaginary units: if $h=A_0+iA_1+jA_2+kA_3$, $A_i\in \mathbb{R}$, then $\bar h=A_0-iA_1-jA_2-kA_3$. From $h_1$ and $h_2$ one defines the derivation
$D_{h_1,h_2}\in D(\hh)$ by
\eqn
D_{h_1,h_2}=[L_{h_1}, L_{h_2}]+[R_{h_1}, R_{h_2}],
\feqn
where $L$ and $R$ are the usual left and right translations.
It is convenient to fix the orthonormal basis $h_0=1$, $h_i$, $i=1,2,3$, where $h_1=i, h_2=j, h_3=k$ are
the imaginary units of $\hh$.
A basis for $D(\hh)$ is thus given by $H_i=ad_{h_i}$, $i=1,2,3$.
\subsection*{Geometry of the Jordan algebra}
On $\jj$ we can define the inner product $\langle j_1,j_2 \rangle= {\rm Tr} (j_1\circ j_2)$, where $\circ$ is the Jordan product
$j_1\circ j_2=(j_1j_2 +j_2 j_1)/2$. The subspace of matrices orthogonal to the $3\times 3$
identity $I_3$
is thus $\jj'$. On it we can define the product $\star : \jj' \times \jj' \to \jj'$ defined by
\eqn
\star :(j_1,j_2)\mapsto j_1\star j_2=j_1 \circ j_2 -\frac 13 \langle j_1,j_2 \rangle I_3.
\feqn

\

\noindent With these simple tools in mind one can complete the Lie product on (\ref{nonalg}) by defining
\eqn
[h_1\otimes j_1, h_2\otimes j_2]:= \frac {\sigma^2}{12} \langle j_1,j_2 \rangle D_{h_1,h_2}-\sigma^2 \langle h_1,h_2\rangle [L_{j_1}, L_{j_2}]
+\frac \sigma2 [h_1,h_2]\otimes (j_1\star j_2) \label{prod}
\feqn
in order to obtain a Lie algebra. For simplicity, we will use the notation
\eqn
E_{7(-133)}=D(\hh)\oplus D(\jj) \oplus (\hh' \otimes \jj'),\label{alg}
\feqn
to indicate the resulting Lie algebra, and choose $\sigma=1$.\\
A sketch of the proof of the validity of (\ref{prod}) can be found in appendix \ref{app:a}. From these rules one easily reconstruct explicitly a
matrix realization of the adjoint representation. This is done in appendix \ref{sec:adjoint}.\\
Before going to the construction of the smallest representation {\bf 56} it is worth to include some comments on the subalgebras.

\subsection*{The $F_4$ and $E_6$ subalgebras and some useful relations}
The $F_4$ Lie subalgebra is manifestly included as the algebra of derivation over the exceptional real octonionic Jordan algebra $\jj$.
If we choose a basis $\{j_a\}_{a=1}^{26}$ of $\jj'$ normalized by $\langle j_a,j_b \rangle=\tau \delta_{ab}$, with a suitable real $\tau$, then
it can be completed to a basis for $\jj$ by adding $j_{27}=\sqrt {\frac \tau3} I_3$. After fixing this, we can determine a basis
$\{D_I\}_{I=1}^{52}$ for $D(\jj)$ as in \cite{F4}.
A 27-dimensional matrix representation is then obtained as
\eqn
(C_I)^{\mu}_{\phantom{\mu}\nu}=\hat j^\mu (D_I (j_\nu)), \label{effe4}
\feqn
where $\{\hat j^\mu\}_{\mu=1}^{27}$ is the dual basis $\hat j^\mu (j_\nu)=\delta^\mu_\nu$. Since  $D_I (j_{27})=0$, the last row and column
of all matrices vanish, so that, this is a ${\bf 26}\oplus {\bf 1}$ representation. Its extension to the 27 fundamental representation of the $E_6$
algebra is obtained by adding the operators corresponding to the right multiplication by $\jj'$. This adds 26 $27\times 27$ matrices
$(\tilde C_a)^\mu_{\phantom{\nu}\nu}$ defined by
\eqn
(\tilde C_a)^\mu_{\phantom{\mu}\nu}=-i\ \hat j^\mu (R_{j_a} (j_\nu)),\qquad
a=1,\cdots,26,\;\;\mu,\nu=1,\cdots,27. \label{e6}
\feqn
The factor $-i$ has been introduced in order to implement the Weyl trick.
Note that this factor consists in making a choice among all possible imaginary quaternions. Passing from one to another choice is done by
the acting with the $SU(2)$ group symmetry. With our choice we break this symmetry down to the $U(1)$ subgroup that lives the imaginary unit invariant.
Its Lie algebra is generated by the derivation $D_i$ over the quaternions, which indeed commutes with the whole $E_6=D(\jj) \oplus \jj'$. In this
way we have recognized the maximal compact subalgebra $E_6 \times \mathbb{R}=\lie((E_6\times U(1))/\mathbb{Z}_3)$.\\
It is interesting to note that, $\jj$ being Abelian, we have $R_{j_a} (j_b)=R_{j_b} (j_a)$ which implies
\eqn
(\tilde C_a)^\mu_{\phantom{\nu} b}=(\tilde C_b)^\mu_{\phantom{\nu} a}. \label{symm}
\feqn
We will see in the next subsection that this symmetry relation is  a particular case of a more general symmetry relation
that has a deep geometrical meaning.\\
Other interesting relations are obtained from the Leibnitz property $D_I(j_a\circ j_b)=D_I(j_a) \circ j_b+j_a \circ D_I(j_b)$:
it becomes
\eqn
(\tilde C_a)^c_{\phantom{c}b} (C_I)^d_{\phantom{c}c} j_d=[(C_I)^c_{\phantom{c}a} (\tilde C_c)^\mu_{\phantom{c}b}
+(C_I)^c_{\phantom{c}b} (\tilde C_a)^\mu_{\phantom{c}c}]j_\mu.
\feqn
Applying the dual basis $\hat j^\nu$ we get
\eqn
&& [C_I, \tilde C_a]^d_{\ b}=(C_I)^c_{\ a} (\tilde C_c)^d_{\ b},\\
&& (C_I)^c_{\ a} (\tilde C_b)^{27}_{\ \ c}+(C_I)^c_{\ b} (\tilde C_a)^{27}_{\ \ c}=0. \label{consec}
\feqn
The first identity gives the very interesting relations
\eqn
\alpha_{Ia}^{\ \ c}=(C_I)^c_{\ a}, \label{relaz}
\feqn
where $\alpha_{Ia}^{\ \ c}$ are some of the structure constants of
$E_6$, the ones defined by $[C_I, \tilde C_a]^d_{\ b}=\alpha_{Ia}^{\ \ c} (\tilde C_c)^d_{\ b}$.
Note that (\ref{relaz}) relates the structure constants of $E_6$ (directly computable from the adjoint representation) to the matrices
of the smallest fundamental representation of $F_4$.\\
Another interesting information comes from identity (\ref{consec}). With our normalization, ${\rm Tr}(j_a\star j_b)=0$
implies
\eqn
(\tilde C_a)^{27}_{\ \ b}=-i\ \sqrt{\frac \tau3} \delta_{ab} \label{27}.
\feqn
Inserted in (\ref{consec}) it gives
\eqn
(C_I)^c_{\ a} \delta_{cb}+(C_I)^c_{\ b} \delta_{ca}=0.
\feqn
Then, the matrices $(C_I)^c_{\ a}$ are antisymmetric!

\subsection{The Yokota construction and {\bf 56}}\label{sec:tits2}
We can obtain the representation ${\bf 56}$ by applying the method explained by Yokota in \cite{IY}. We will first consider
the general complex realization and then we will specialize to the compact form. From our previous analysis we know that the
Lie algebra can be written as
\begin{eqnarray}
  E_7=E_6 \oplus i\otimes \mathbb{R} \oplus j \otimes \jj \oplus k\otimes \jj,
\end{eqnarray}
where we have included the remaining derivative generators $D_j$ and $D_k$ in the last two addends. As supporting space we
take the $56$-dimensional space
\begin{eqnarray}
  V_{56}=(\jj \oplus \mathbb{C})^2.
\end{eqnarray}
One has to define an action of $E_7$ on this space. To this end it is convenient to introduce some further geometry.
\subsection*{The determinant form.}
The Jordan algebra is endowed with a trilinear form, the determinant form
\begin{eqnarray}
  Det: \jj \times \jj \times \jj \longrightarrow \mathbb{C},
\end{eqnarray}
defined by
\begin{eqnarray}
  Det(j_1,j_2,j_3)=\frac 13 \tr(j_1 \circ j_2 \circ j_3)-\frac 16 \left( \tr(j_1) \tr(j_2 \circ j_3) + \tr(j_2) \tr(j_1 \circ j_3)+
  \tr(j_3) \tr(j_1 \circ j_2) \right) + \frac 16 \tr(j_1) \tr(j_2) \tr(j_3).
\label{determinant}
\end{eqnarray}
This is a fundamental ingredient in realizing exceptional Lie algebras. For example, it is left invariant by the action of $E_6$ on $\jj$. Indeed,
the group $E_6$ can be defined as the group of endomorphisms of $\jj$ that leave the determinant form invariant. The form $Det$ is a completely
symmetric tensor $D$, also called the cubic invariant of $E_6$, with component $D_{\alpha\beta\gamma}=Det(j_\alpha, j_\beta, j_\gamma)$.
$Det$ is non degenerate. This means that it induces an action $\rhd$ of $\jj$ on itself
\begin{eqnarray}
  \rhd : \jj \times \jj & \longrightarrow & \jj, \\
          (j_1,j_2) & \longmapsto & j_1 \rhd j_2,
\end{eqnarray}
defined by the relation $Det(j_1,j_2,j_3)=:\frac 13 \tr\big( (j_1 \rhd j_2) \circ j_3 \big)$.
This is called the Freudenthal product. More explicitly:
\begin{eqnarray}
j_1 \rhd j_2=j_1 \circ j_2-\frac 12 \tr(j_1) j_2 - \frac 12 \tr(j_2) j_1 +\frac 12 \tr(j_1) \tr(j_2) I_3 -\frac 12 \tr(j_1 \circ j_2)  I_3.
\label{jaction}
\end{eqnarray}

\

\

With this richer structure at hand we can define the action of $E_7$ on $V_{56}$. Given $X\in E_6$, let $\phi_X$ its image under the fundamental
representation ${\bf 27}$ (for example constructed in \cite{E6}). Then, the image of $X$ under ${\bf 27'}$ (the second $27$-dimensional representation
of $E_6$) is $-\phi_X^t$. For $g=(X, \nu, j_1, j_2)\equiv X+i\otimes \nu +j\otimes j_1 +k\otimes j_2 \in E_7$ and
$v:=(\tilde j_1, \mu_1, \tilde j_2, \mu_2)^t \in V_{56}$, we define \cite{IY}
\begin{eqnarray}
  gv:=\begin{pmatrix}
    \phi_X \tilde j_1 +\frac i2 \sqrt{\frac \tau3} \nu \tilde j_1 +(j_2 +i\otimes j_1) \rhd \tilde j_2 +\frac 12 \mu_2 (-j_2+i\otimes j_1) \\
    \frac 12 \langle (-j_2+i\otimes j_1), \tilde j_2\rangle -\frac i2 \sqrt {3\tau} \nu \mu_1 \\
    -\phi^t_X \tilde j_2 -\frac i2 \sqrt{\frac \tau3} \nu \tilde j_2 +(-j_2 +i\otimes j_1) \rhd \tilde j_1 +\frac 12 \mu_1 (-j_1+i\otimes j_1) \\
    \frac 12 \langle (j_2+i\otimes j_1), \tilde j_1\rangle +\frac i2 \sqrt {3\tau} \nu \mu_2
  \end{pmatrix}.
\end{eqnarray}
Note that w.r.t. \cite{IY} we have fixed a prescription $2B=j_2+i\otimes j_1$, $2A=-j_2+i\otimes j_1$ and $\nu \mapsto -\frac i2 \sqrt{\frac 32} \nu$
in order to get a compact real form. $\tau$ is the normalization in the trace product of the Jordan basis:
$\tr (j_\alpha \circ j_\beta)=\tau \delta_{\alpha\beta}$. The remaining coefficient are chosen so that the corresponding matrices have all the same
normalization w.r.t. the trace product. An explicit realization with $\tau=2$ is given in appendix \ref{sec:56}.\\
It is wort to look at the general form of the resulting matrices in the ${\bf 56}$. Let us choose the normalization $\tau=2$:
\eqn
&& Y=\left(
\begin{array} {c|c|c|c}
\phi +i\frac \nu{\sqrt 6}  & \ov & -\sum_{\alpha=1}^{27} \bar z^\alpha A_\alpha & \frac 1{\sqrt 2} \sum_{\alpha=1}^{27} z^\alpha \vec e_\alpha \cr
\hline
\ov^t  & -i\nu \sqrt {\frac 32} & \frac 1{\sqrt 2} \sum_{\alpha=1}^{27} z^\alpha \vec e_\alpha^t & 0 \cr
\hline
\sum_{\alpha=1}^{27} z^\alpha A_\alpha  & -\frac 1{\sqrt 2} \sum_{\alpha=1}^{27} \bar z^\alpha \vec e_\alpha & -\phi^t -i\frac \nu{\sqrt 6} & \ov \cr
\hline
-\frac 1{\sqrt 2} \sum_{\alpha=1}^{27} \bar z^\alpha \vec e_\alpha^t  & 0 & \ov^t & i\nu \sqrt {\frac 32}
\end{array}
\right),
\feqn
where $\phi$ is in the ${\bf 27}$ of $E_6$, $\nu\in \mathbb{R}$ and $z^\alpha \in \mathbb{C}$. The matrices $A_\alpha$ have components
\begin{eqnarray}
  (A_\alpha)_{\beta\gamma}=\frac 12 \tr ((j_\alpha \rhd j_\gamma)\circ j_\beta)=\frac 32 D_{\alpha\gamma\beta}.
\end{eqnarray}
Thus, the matrices $A_\alpha$, $\alpha=1,\ldots, 27$ are determined by the invariant cubic tensor $D$ and are then symmetric and totally symmetric
by including the index $\alpha$. Actually we can say more. Assume $a,b,c =1,\ldots, 26$. Then $\tr j_a =\tr j_b =\tr j_c=0$ and
\begin{eqnarray}
  D_{abc}=(A_c)_{ab}=\frac 12 \tr (j_c \circ j_b \circ j_a) \equiv i (\tilde C_c)^a_{\ b}.
\end{eqnarray}
Then, also the matrices $\tilde C_a$, when we drop the 27-th row and column, are totally symmetric. This implies (\ref{symm}) as a particular case.\\
By applying (\ref{jaction}) to compute the remaining components, we get
\begin{eqnarray}
  && 0=\frac 32 D_{a,27,27}=(A_a)_{27,27}=(A_{27})_{a,27}=  (\tilde C_a)^{27}_{\ \ 27}=(\tilde C_{27})^{a}_{\ 27}; \\
  && -\frac 12 \sqrt {\frac 23} \delta_{ab}= \frac 32 D_{a,b,27}=(A_a)_{b,27}=(A_{27})_{ab} =-\frac i2  (\tilde C_a)^b_{\ 27}=-\frac i2
  (\tilde C_{27})^b_{\ a};\\
  && \sqrt {\frac 23} =\frac 32 D_{27,27,27}=(A_{27})_{27,27}=i (\tilde C_{27})^{27}_{\ \ 27}.
\end{eqnarray}
These interesting relations provide a simple way to construct the matrices $\tilde C_a$ which extend the $F_4$ algebra to the $E_6$ algebra.

\subsection{Construction of the group}\label{sec:group}
From the Tits construction we can  easily get the main features required to perform the Euler construction with respect
to the maximal subgroup $(E_6\times U(1))/{\mathbb {Z}_3}$.
If $\vec n$ is any normalized three vector, we see that setting $\vec h =(h_1,h_2,h_3)$ and $\vec H =(H_1,H_2,H_3)$,
\eqn
\mathfrak{u}:= \vec n \cdot \vec H \oplus \jj \dot + ((\vec n \cdot \vec h) \otimes \jj')
\feqn
is a subalgebra of $\g$ which generates a maximal compact subgroup $U(1)\times E_6$. Indeed, note that $\vec n \cdot \vec H$ has vanishing
Lie product with the whole $\mathfrak {u}$. The linear complement of $\vec n \cdot \vec H$ in $\mathfrak {u}$ reproduces the same rules we
used in \cite{E6} to extend $F_4$ to $E_6$. In particular, restricted to $(\vec n \cdot \vec h) \otimes \jj'$, the product (\ref{prod})
is $[(\vec n \cdot \vec h)\otimes j, \vec n \cdot \vec h\otimes j']:= -[L_j, L_{j'}]$. The minus sign is exactly what we needed in
\cite{E6} to go from the split form $E_{6(-26)}$ to the compact form. This is not surprising since we are starting from a compact group,
but it can be considered as a consistency check. Note that we thus have a family of $E_6\times U(1)$ subgroups parameterized by
the choice of the vector $\vec n$. On this family there is the action of an $SU(2)$ subgroup changing $\vec n$, underlying the presence
of a quaternionic structure. However, note also that in order to realize the quotient space $E_7/((E_6\times U(1))/{\mathbb {Z}_3})$
one has to fix the $\vec n$ thus breaking the $SU(2)$ structure and consequently the quaternionic structure.\\
For definiteness, we will choose $\vec n=e_1$, where $e_i$, $i=1,2,3$ is the canonical basis for $\mathbb {R}^3$. Thus
\eqn
\mathfrak{u}=\lie (U(1)\times E_6)=H_1 \oplus D(\jj) \dot + (h_1 \otimes \jj').
\feqn
Having selected the maximal subgroup $U=\exp \h$, we look at the construction of the Euler parametrization \cite{F4}
\eqn
E_7=B e^V U.
\feqn
Here $V$ is a maximal subspace of the linear complement $\p$ of $\h$ in $\g$ such that $Ad_U (V)=\p$, whereas $B=U/U_o$, where
$U_o$ is the kernel of the map:
\bes
{\rm Ad}^o:U \rightarrow {\rm Aut}({\rm Lie}(U)), \qquad u\mapsto {\rm
  Ad}^o(U):={\rm Ad}_u|_V,
\ees
the apex $o$ means restriction to $V$. Let us begin with the selection of $V$. We take $H_2$ as
a first generator of $V$. Acting on it with $ad_\h$ we generate $H_3$ and $h_3\otimes \jj'$. To generate $h_2\otimes \jj'$ we need
to add at least an element of the form $h_2\otimes j_a$. Now, $\jj'$ contains a particular two dimensional subspace $W$ which corresponds
to the vector space of diagonal traceless Jordan matrices. Let $\{j_1, j_{18}\}$ be a basis for this subspace. Up to now, the indices
we have chosen are arbitrary labels. However, we are now referring to \cite{E6} where we have chosen a well defined basis for $\jj'$. In that basis
the diagonal generators are just $j_1$ and $j_{18}$ which correspond to the matrices $c_{53}$ and $c_{70}$ of $E_6$.
Note that the elements of $h_2\otimes W$ commute w.r.t. the product (\ref{prod}). If we fix, for example, the element
$h_2\otimes j_1$, acting on it with $ad_{D(\jj)}$ we will generate all the basis elements $h_2\otimes j_a$ but $h_2\otimes j_{18}$. We conclude that
\eqn
V=\RR H_2 \oplus W=\RR H_2 \oplus \RR (h_2\otimes j_1)\oplus \RR (h_2\otimes j_{18}).
\feqn
At this point we have that $\dd U_o =2\dd \h +\dd V-\dd \g=28$. On the other hand, we already know what $U_o$ is, it has, in fact,  been studied in \cite{E6}.
Indeed, if $\h_o=\lie (U_o)$, then $\h_o$ is the subset of $\h$ of elements which commute with $H_2, h_2\otimes j_1$ and $h_2\otimes j_{18}$.
But this is the subset of $D(\jj)$, commuting with $h_2\otimes j_1$ and $h_2\otimes j_{18}$, which determine exactly the subgroup $SO(8)$
of $F_4$ commuting with $W$, studied in \cite{E6} (in that case $W$ was generated by $c_{53}$ and $c_{70}$).\\
This allows us to provide the final expression for the general element of the group. Indeed, if $\Psi_A$, $A=1,\ldots,133$ is the basis of $\g$
defined above, and $c_s$, $s=1,\ldots, 78$ is the matrix representation of $\lie(E_6)$ given in
\cite{E6}, then the map
\eqn
\psi: \lie(E_6) \hookrightarrow \g, \quad c_s \mapsto \Psi_{s+3}
\feqn
gives an embedding of $E_6$ in $E_7$. This provides the generic element
$$
U(x_1, \ldots, x_{79})=\exp(x_1 \Psi_1) \psi_*(E_6 [x_2,\ldots,x_{79}]),
$$
where $E_6$ is the parametrization given in \cite{E6} and $\psi_*$ is
the push forward of $\psi$ under the exponential map. Next, $B$ is easily obtained from $E_6$ by dropping the last $28$ factors on
the right, obtaining
$$
B[x_1,\ldots, x_{51}]=U(x_1,\ldots,x_{51},0,\ldots,0).
$$
The basis of $V$ being given by $\{\Psi_2, \Psi_{82}, \Psi_{99}\}$, we finally get
\eqn
E_7[x_1,\ldots,x_{133}]= e^{x_{1} \Psi_1} \psi_*(E_6 [x_2,\ldots,x_{51},0,\ldots,0]) e^{x_{52} \Psi_2+x_{53} \Psi_{82}+x_{54} \Psi_{99}}
e^{x_{55} \Psi_1} \psi_*(E_6 [x_{56},\ldots,x_{133}]). \label{gruppoE7}
\feqn
Thus we are left with the problem of determining the range of the parameters. This can be done by means of the topological method developed
in \cite{F4}. Concretely, this consist in choosing the range of parameters in order to define a $133$ dimensional closed cycle. This will
eventually cover the group an integer number $N$ of times, so that one must finally reduce the range to have $N=1$.
The explicit deduction of the range of parameters is done in Appendix \ref{sec:parameters}. We conclude this section with some further comments.

\subsubsection{Remarks}
To realize concretely the group we could use either the adjoint representation or the $\bf {56}$. However, the kernel of
the Adjoint representation of the group is its center, which
is $\mathbb{Z}_2$ for $E_7$, so that using it in (\ref{gruppoE7}) will provide the group $E_7/\mathbb{Z}_2$ in place of $E_7$. Indeed, $\ker Ad$ is, by construction, the subset of $G$
which commute with all $G$.
On the contrary, the representation $\bf {56}_G$ of the group (we add a suffix $G$ to distinguish the group from the algebra)
is faithful so that we can use $\bf {56}$ to construct $E_7$.
This is a well known fact, but we can check it directly from our construction. Let $M_A$ be the basis of
the algebra $E_7$ in the adjoint representation, and $Y_A$ the corresponding basis in the $\bf{56}$.
They have exactly the same
structure constants, so that they can be thought of as  representing the same elements of the algebra.
However, $\bf {56}$ contains the nontrivial
generator of the center
\begin{equation}
-I_{56}=\exp (\sqrt 6 \pi Y_1).
\end{equation}
Obviously the Adjoint representation cannot contain $-I_{133}$ and, indeed, $\exp {\sqrt 6 \pi M_1}=+I_{133}$. Thus, the correspondence
\begin{eqnarray}
\xi: {\bf{56}} \longrightarrow {\bf{133}}, \quad Y_A \longmapsto M_A, \qquad\ A=1,\ldots,133,
\end{eqnarray}
defines a surjective homomorphism
\begin{eqnarray}
\Xi:{\bf{56}_G} \longrightarrow {\bf{133}_G}, \quad \exp {\sum_{A=1}^{133} {\lambda^A Y_A}} \longmapsto \exp {\sum_{A=1}^{133} {\lambda^A M_A}},
\end{eqnarray}
which has kernel $\ker \Xi=\mathbb{Z}_2=\{I_{56}, -I_{56}\}$.
Then, $\bf{56}_G$ is a double covering of $\bf{133}_G$.
\\
In particular, let us consider the corresponding one parameter subgroups:
\begin{eqnarray}
&& h_A(t)=\exp (t M_A), \qquad\ A=1,\ldots 133, \qquad\ t\in \mathbb{R},\\
&& g_A(t)=\exp (t Y_A), \qquad\ A=1,\ldots 133, \qquad\ t\in \mathbb{R}.
\end{eqnarray}
We then note that for $A>3$ all of them have period $4\pi$, apart from $A=73, 99$ and $125$, which have period $4\pi\sqrt 3$,
whereas for $A=1,2,3$, $h_A$ has period $T_h=\sqrt {6} \pi$ and $g_A$
has period $T_g=\sqrt 6\ 2\pi=2T_h$ as a consequence of the double covering.\\
Finally, we remark that the maximal subgroup of $E_7$ is $(E_6\times U(1))/\mathbb{Z}_3$.
Indeed, the generator
\begin{eqnarray}
\omega:=\exp (4\frac \pi{\sqrt 3} Y_{73})= \exp (2\sqrt{\frac 23} \pi Y_1)\in E_6\cap U(1)
\end{eqnarray}
satisfy $\omega^3=I_{56}$ so that it is a generator of $\mathbb{Z}_3$ common to both $E_6$ and $U(1)$.

\section{The $E_{7(7)}$ construction}\label{sec:E7(7)}
We want to construct the $E_7$ compact form related to the split form by means of the Weyl unitary trick. In this case the maximal compact subgroup that is
privileged is $SU(8)/\mathbb{Z}_2$. To this aim we will follow the paper \cite{adams}, chapter 12, see also \cite{Cacciatori:2010ws}.\\
Let $V$ be an eight dimensional real vector space and $V^*$ be its dual.
Let $\wedge^iV$ be the $i$-th external power of $V$. We can fix an isomorphism $\wedge^8V\simeq \rr$. $\SL(V)$ is the group
of automorphisms preserving such isomorphism. Let $L:=\sla(V)$ be its Lie algebra. We will construct the representation ${\bf 56}$ of $E_7$ by
extending the representation of $L$ on $W:=\wedge^2V\oplus\wedge^2V^*$ to a representation of $E_7$.
Set $A:=L\otimes\wedge^4 V$. We wish to see $A$ acting as a Lie algebra of linear maps $W\rightarrow W$.
The action of $L$ on $W$ is as usual:
\be
L(W)=L(V)\wedge V\oplus L(V^*)\wedge V^*+V\wedge L(V)\oplus V^*\wedge L(V^*),
\ee
where $L(V^*)$ is the adjoint action.
If $i+j=8$ the pairing $\wedge^i V\otimes\wedge^j V\rightarrow \wedge^8V\simeq\rr$, given by the wedge product, defines an isomorphism
$\wedge^iV\simeq\wedge^jV^*$. This isomorphism can be used to define the second component of the action of $A$.
Actually, an action of $\lambda^4\equiv\wedge^4V$ on $W$ can be obtained as follows:
\begin{align}\label{id}
&\lambda^4\otimes\wedge^2V\stackrel{\wedge}{\longrightarrow}\wedge^6V\simeq\wedge^2V^* \cr
&\lambda^4\otimes\wedge^2V^*\simeq\wedge^4V^*\otimes\wedge^2V^*\stackrel{\wedge}{\longrightarrow}\wedge^6V^*\simeq\wedge^2V,
\end{align}
where $\wedge$ is the usual multiplication in the exterior algebra $\wedge V$. Thus, $A$ is a $133$-dimensional real vector space of operators acting on $W$,
that indeed realizes an $E_7$ Lie algebra representation, more precisely the $E_{7(7)}$ split form. In order to see this and realize the compact
form, let us look more carefully at the explicit matrix realization.

\subsection{Matrix realization}
We identify $V$ with $\rr^8$. The action of $L$ on $V$ is generated by the action of all traceless matrices in $M(8,\rr)$. Fix a basis $\{e_i\}_{i=1}^8$ of $V$
and define a basis $\{A_{kl},S_{kl},D_\alpha\}$ for $M(8,\rr)$, where $1\leq k<l\leq 8$, $\alpha=1,\cdots, 7$ and
$A_{kl}e_i=\delta_{li}e_k-\delta_{ki}e_l$, $S_{kl}e_i=\delta_{li}e_k+\delta_{ki}e_l$, and $D_\alpha$ form a basis of diagonal traceless matrices
$D_\alpha=\diag(D^1_\alpha,\cdots,D_\alpha^8)$.
We can normalize them as $\trace(D_\alpha D_\beta)=2\delta_{\alpha\beta}$, so that all matrices are orthogonal, the symmetric matrices are normalized to $2$ and
the antisymmetric ones to $-2$ w.r.t. the trace product.\\
In order to write the action of this basis on $W$ let us introduce the following notations:
\begin{itemize}
\item we select a basis $e_{ij}:=e_i \wedge e_j$, $i<j$ of $V\wedge V$ and the canonical dual basis $\varepsilon^{ij}$. As usual we will extend the range
of the indices $i,j$ as running independently from $1$ to $8$, by assuming antisymmetry (so $e_{ij}=-e_{ji}$ and so on).
\item A vector $v\in V\wedge V$ can be then written as
$$
v=\frac 12 \sum_{i,j} v^{ij} e_{ij} =\sum_{i<j} v^{ij} e_{ij},
$$
and a vector $w\in V^*\wedge V^*$ as
$$
w=\frac 12 \sum_{i,j} w_{ij} \varepsilon^{ij} =\sum_{i<j} w_{ij} \varepsilon^{ij}.
$$
\item A linear operator $M:V\wedge V\rightarrow V\wedge V$ acts on the components as
$$
(Mv)^{ij}=\sum_{i<j}M^{ij}_{\ \ kl}v^{kl},
$$
and similar notations for the other possibilities $[V^*\wedge V^*\rightarrow V^*\wedge V^*]$,
$[V^*\wedge V^*\rightarrow V\wedge V]$ and $[V\wedge V\rightarrow V^*\wedge V^*]$.
\end{itemize}
We can then easily write down the explicit matrix action of $L$ over $\rr^{56}\simeq V\wedge V\oplus V^*\wedge V^*$:
\begin{eqnarray}
  && A_{kl}= \begin{pmatrix}
    (A_{kl}^u)^{ij}_{\ \ i'j'} & 0 \\ 0 & (A_{kl}^d)_{ij}^{\ \ i'j'}
  \end{pmatrix} \cr
  && \phantom{A_{kl}}= \begin{pmatrix}
   (\delta_{ki'}\delta_{li}-\delta_{ki}\delta_{li'})\delta_{jj'}+
   (\delta_{kj'}\delta_{lj}-\delta_{kj}\delta_{lj'})\delta_{ii'} & 0 \\ 0 & (\delta_{ki'}\delta_{li}-\delta_{ki}\delta_{li'})\delta_{jj'}+
   (\delta_{kj'}\delta_{lj}-\delta_{kj}\delta_{lj'})\delta_{ii'}
  \end{pmatrix}\\
  && S_{kl}= \begin{pmatrix}
    (S_{kl}^u)^{ij}_{\ \ i'j'} & 0 \\ 0 & (S_{kl}^d)_{ij}^{\ \ i'j'}
  \end{pmatrix} \cr
  && \phantom{A_{kl}}= \begin{pmatrix}
   i(\delta_{ki'}\delta_{li}+\delta_{ki}\delta_{li'})\delta_{jj'}+
   i(\delta_{kj'}\delta_{lj}+\delta_{kj}\delta_{lj'})\delta_{ii'} & 0 \\ 0 & -i(\delta_{ki'}\delta_{li}+\delta_{ki}\delta_{li'})\delta_{jj'}-
   i(\delta_{kj'}\delta_{lj}+\delta_{kj}\delta_{lj'})\delta_{ii'}
  \end{pmatrix} \\
  && D_{\alpha}= \begin{pmatrix}
    (D_{\alpha}^u)^{ij}_{\ \ i'j'} & 0 \\ 0 & (D_{\alpha}^d)_{ij}^{\ \ i'j'}
  \end{pmatrix} = \begin{pmatrix}
    i(D_\alpha^i+D_\alpha^j) \delta^{ij}_{i'j'} & 0 \\ 0 & -i(D_\alpha^i+D_\alpha^j) \delta^{i'j'}_{ij}
  \end{pmatrix},
\end{eqnarray}
where $\delta^{ij}_{i'j'}$ is the identity over $V\wedge V$. Note that, in order to obtain the compact form, we have multiplied by $i$ the symmetric matrices.\\
For the remaining 70 generators we have to consider the action of $\{\lambda_{i_1i_2i_3i_4}=e_{i_1}\wedge e_{i_2}\wedge e_{i_3}\wedge e_{i_4}\}_{i_1<i_2<i_3<i_4}$
on $W$. This is easily realized by implementing the identifications \eqref{id}:
\begin{align}
(\lambda_{i_1i_2i_3i_4})\otimes (e_{j_1j_2})&\mapsto \frac 12 \epsilon_{i_1i_2i_3i_4j_1j_2k_1k_2}\varepsilon^{k_1k_2} \cr
(\lambda_{i_1i_2i_3i_4})\otimes (\varepsilon_{j_1j_2})&\mapsto \frac 12 \delta^{j_1j_2k_1k_2}_{i_1i_2i_3i_4}e_{k_1k_2},
\end{align}
where $\epsilon$ is the standard eight dimensional Levi-Civita tensor and
\be
\delta^{j_1j_2j_3j_4}_{i_1i_2i_3i_4}=\sum_{\sigma\in\mathcal{P}}\epsilon_\sigma\delta_{i_{\sigma(1)}}^{j_1}\delta_{i_{\sigma(2)}}^{j_2}
\delta_{i_{\sigma(3)}}^{j_3}\delta_{i_{\sigma(4)}}^{j_4}
\ee
with $\mathcal{P}$ the set of permutations and $\epsilon_\sigma$ is the parity of $\sigma$.
The action of $\lambda_{i_1i_2i_3i_4}$ in the block matrix form, with respect to the decomposition $W=\wedge^2V\oplus \wedge^2V^*$, is then
\be
\lambda_{i_1i_2i_3i_4}=
\begin{pmatrix}
0 & (\lambda^u_{i_1i_2i_3i_4})_{ij,kl} \\
(\lambda^d_{i_1i_2i_3i_4})^{ij,kl} & 0
\end{pmatrix}
=
\begin{pmatrix}
0 & \epsilon_{i_1i_2i_3i_4ijkl} \\
\delta^{ijkl}_{i_1i_2i_3i_4} & 0
\end{pmatrix}.
\ee
Note that the matrices $\lambda^u$ and $\lambda^d$ are both symmetric. Let us introduce the ordered tetra-indices $I\equiv \{i_1i_2i_3i_4\}$ with the
rule $i_1<i_2<i_3<i_4$. Its complementary is the unique ordered tetra-index $\tilde{I}$ such that $\epsilon_{I\tilde{I}}\neq 0$. Then
\be
\transp{\lambda}_I=\epsilon_{I\tilde{I}}\lambda_{\tilde{I}}.
\ee
Thus, we can change basis for $\lambda^4$ introducing symmetric matrices
\be
\mathcal{S}_I:=\frac{i}{\sqrt{2}}(\lambda_I+\epsilon_{I\tilde{I}}\lambda_{\tilde{I}})
\ee
and antisymmetric matrices
\be
\mathcal{A}_I:=\frac{1}{\sqrt{2}}(\lambda_I-\epsilon_{I\tilde{I}}\lambda_{\tilde{I}}).
\ee
Again, we have included the imaginary unit for the symmetric matrices.
The cardinality of the set of tetra-indices is 70 so that only half of the $\mathcal{S}_I$ (and of the $\mathcal{A}_I$) can be linearly independent.
Indeed, we have $\mathcal{S}_I=\mathcal{S}_{\tilde{I}}$ ($\mathcal{A}_I=-\mathcal{A}_{\tilde{I}}$).
To avoid this double over-counting we can restrict ourselves to the subset $\mathcal{I}_0$ of tetra-indices defined in appendix \ref{app:range77}.
Thus, a basis for $\wedge^4 V$ (as linear operators over $W$) is:
\be
\{\mathcal{S}_I,\mathcal{A}_I\}_{I\in\mathcal{I}_0}.
\ee
A basis for $A\equiv E_7$ is then
\be
\{A_{kl},\mathcal{A}_I,D_\alpha,S_{kl},\mathcal{S}_I \}_{1\leq k<l\leq 8;\,1\leq\alpha\leq 8;\,I\in\mathcal{I}_0}.
\ee
All matrices are orthogonal. The antisymmetric matrices $A_\mu\equiv\{A_{kl},\mathcal{A}_I\}$ are normalized by $\trace (A_\mu A_\nu)=-2\delta_{\mu\nu}$
and have cardinality $28+35=63$ so that generate the maximal compact subgroup $\SU(8)/\mathbb{Z}_2$. The remaining 70 symmetric generators
$S_\Lambda\equiv\{D_\alpha,S_{kl},S_I\}$ are normalized by $\trace S\lambda S_M=2\delta_{\Lambda M}$, so that are the noncompact part of the algebra.
In particular, the 7 diagonal matrices $D_\alpha$ generate a Cartan subalgebra.

\subsection*{Summarizing.}
The above construction furnishes the Lie algebra of $E_{7(-133)}$, but the split form can be recovered simply by dropping the imaginary unit $i$
from the symmetric matrices. \\
Let us summarize the structure of the matrices. 63 of them are block diagonal
\be
M_i=
\begin{pmatrix}
\sla(8) & 0 \\
0 & \sla(8)
\end{pmatrix},
\qquad i=1,\cdots,63,
\ee
where the two diagonal block are the 28-dimensional representation of $\sla(8)$ on $\wedge^2 V$ and $\wedge^2 V^*$ respectively.
A basis for $\sla(8)$ is composed by 35 symmetric matrices and 28 antisymmetric matrices. $\SL(8)$ contains the maximal compact subgroup $\SO(8)$.
The 28 antisymmetric matrices generate its Lie algebra in the compact form. Of the other 35 symmetric matrices 7 are diagonal with vanishing trace and 28 are
symmetric with all diagonal elements equal to zero. The 7 diagonal elements generate a Cartan subalgebra of $\sla(8)$ and, obviously, also of $E_7$.
The remaining 70 matrices have the structure
\be
M_i=
\begin{pmatrix}
0 & \wedge^4 V \\
\wedge^4 V & 0
\end{pmatrix}, \qquad i=70,\cdots, 133.
\ee
Among these matrices 35 are symmetric and 35 antisymmetric. The group $\SU(8)/\mathbb{Z}_2$ is a maximal compact subgroup of $E_7$ and its Lie algebra is
generated by the 63 antisymmetric matrices. Thus we can write $e_7=\su(8)\oplus \mathfrak{p}$, where $\mathfrak{p}$ is the complement of $\su(8)$ in $e_7$.

\subsection{Construction of the group}
The strategy for constructing the generalized Euler parametrization of the group is the same as for the previous construction so that we will only
sketch the main steps. In this case the reference subgroup is the smallest maximal compact subgroup $\SU(8)/\zz_2$.
Let $\mathfrak{u}=\lie(\SU(8))$ and, as before, $\mathfrak{g}=\lie(E_7)\equiv e_7$ and $\mathfrak{p}$ the complement of $\mathfrak{u}$ in $\mathfrak{g}$.
Having selected the maximal subgroup $U=\exp \mathfrak{u}$ we look at the construction of the Euler parametrization \cite{F4}:
\be \label{eul}
E_7=Be^VU.
\ee
Here $V$ is a maximal subspace of the linear complement $\mathfrak{p}$ of $\mathfrak{u}$ in $\mathfrak{g}$ such that $\Ad_U(V)=\mathfrak{p}$,
whereas $B=U/U_0$, where $U_0$ is the kernel of the map
\be
\Ad^0:U\rightarrow\Aut(\lie(U)),\qquad u\mapsto \Ad^0(U):=\Ad_u|_V,
\ee
the apex 0 means restriction to $V$.
As $V$ we can choose the 7 diagonal matrices of a Cartan subalgebra of $e_7$: they form an Abelian subalgebra that can be used to generate
$\mathfrak{p}$ by means of the adjoint action of $U$ (a general proof of these statements will appear in \cite{to-appear}).
For dimension reasons it follows that $U_0$, the kernel of $\Ad^0$, is at most a discrete subgroup. Indeed, it can be shown that $U_0=\mathbb{Z}_2^7$,
\cite{to-appear}.\\
Thus, we can introduce coordinates $\underline x =(x_1,\ldots, x_{63})$, $\underline z =(z_1,\ldots, z_{63})$ and
$\vec y =(y_1,\ldots, y_{7})$ so that\footnote{Concretely, $B[\underline x]=U[\underline x]$ and the difference is only in the range.}
$E_7=U[\underline{x}] e^{V[\vec y ]} U[\underline{z}]$, where $U[\underline{z}]$ is a parametrization
of $\SU(8)/\zz_2$, the range of the parameters $\underline x$ is reduced by the action of $U_0$, and
$$
V[\vec y ]=\sum_{\alpha=1}^7 y^\alpha D_\alpha.
$$
We will now focus on the determination of the range for $\vec y$. The ranges for the coordinates $\underline x$ and $\underline z$, can be easily determined,
for example, as in \cite{Bertini:2005rc}.\\
Let $\mathfrak t$ be the complement of $V$ in $\mathfrak p$.
The general strategy developed in \cite{F4} shows that the invariant measure over $E_7$ is
\begin{eqnarray}
&& d\mu_{E_7}=d\mu_{U} [\underline x]\ d\mu_{U} [\underline z]\ |f(\vec y)|\ d\vec y^7, \\
&& f(\vec y ):={\rm det} [\Pi \circ {\rm Ad}_{e^{-V}} : \mathfrak{u} \rightarrow \mathfrak {t} ],
\end{eqnarray}
where $\Pi$ is the orthogonal projection on $\mathfrak t$. We assume that one always works with orthonormal bases, so that the determinant function
is well defined up to an irrelevant sign (we need only the modulus). The function $f$ is determined in appendix \ref{app:range77}, the final expression is
\begin{eqnarray}
  |f(\vec y)|=\prod_{\beta \in {\rm Rad}^+} \sin (|\sum_{a=1}^7 y^a \beta(D_a)| ),
\end{eqnarray}
where ${\rm Rad}^+$ is the set of positive roots w.r.t. $V$.
Thus, the equations for the range of $\vec y$ are
\begin{eqnarray}
0< |\sum_{a=1}^7 y^a \beta(D_a)|<\pi, \qquad\ \beta \in {\rm Rad}^+.
\end{eqnarray}
This is a set of $63$ double inequalities, which, however, can be quickly reduced to a set of eight equations as follows. Indeed, all positive
roots can be obtained as non negative integer linear combinations of the simple roots. In particular, there exists a unique longest root
$$\beta_{max}=2\alpha_1+2\alpha_2+3\alpha_3+4\alpha_4+3\alpha_5+2\alpha_6+\alpha_7$$
whose coefficients are the highest ones. From this it follows that all inequalities are a consequence of the ones corresponding to the simple roots plus the
one associated to the longest one. These are
\begin{eqnarray}
&& 0< \frac 12 (y^1-y^2-y^3-y^4-y^5-y^6+\sqrt 2 y^7) <\pi, \\
&& 0< y^1+y^2 <\pi, \\
&& 0< -y^1+y^2 <\pi, \\
&& 0< -y^2+y^3 <\pi, \\
&& 0< -y^3+y^4 <\pi, \\
&& 0< -y^4+y^5 <\pi, \\
&& 0< -y^5+y^6 <\pi, \\
&& 0< \sqrt 2 y^7 <\pi,
\end{eqnarray}
and characterize completely the range for $\vec y$.

\section{The $E_{7(-5)}$ construction}\label{sec:E7(-5)}
The $E_{7(-5)}$ construction can be easily obtained from the $E_{7(-25)}$ one by following the analysis of Yokota in \cite{IY}.
The maximal compact subgroup of $E_{7(-5)}$ is $U_5=({\rm Spin (12)}\times {\rm SU}(2))/(\mathbb{Z}_2 \times \mathbb{Z}_2)$.
Let us consider the map
\begin{eqnarray}
  \sigma: \jj \longrightarrow \jj, \quad\ \begin{pmatrix}
    a & o_1 & o_2 \\
    \bar o_1 & b & o_3 \\
    \bar o_2 & \bar o_3 & c
  \end{pmatrix} \longmapsto
  \begin{pmatrix}
    a & -o_1 & -o_2 \\
    -\bar o_1 & b & o_3 \\
    -\bar o_2 & \bar o_3 & c
  \end{pmatrix}.
\end{eqnarray}
Thus $\sigma$ lies in the group $F_4 \subset E_6 \subset E_7$. In \cite{IY} it is shown that $U_5$ is the subgroup of elements
$g\in E_7$ such that $\sigma g=g \sigma$. \\
Let us go back to the Lie algebra. In section \ref{sec:tits1} we have realized the Lie algebra $E_{7}$ as $E_6+i\rr+ \jj_{\mathbb C}$, where
$\rr$ is generated by the derivation $D_{i}$ associated to the imaginary unit $i$, and
$\jj_{\mathbb C}\simeq \jj \oplus i\jj \simeq (\rr D_j\oplus j \otimes \jj') \oplus (\rr D_k \oplus k\otimes \jj')$ is the complexification
of the exceptional Jordan algebra $\jj$. Then, it follows that the Lie algebra of $U_5$ is generated by the subgroup
$E_6^\sigma =\{ h\in E_6| \sigma h=h \sigma \} \simeq {\rm spin}(10)\times i\mathbb R$ plus the elements $\jj_{\cc}^\sigma$ which are invariant under $\sigma$.
$E_6^\sigma$ is the subset of elements of the Lie algebra $E_6$ that leave the element
$$
J_1:=\begin{pmatrix}
    1 & 0 & 0 \\
    0 & 0 & 0 \\
    0 & 0 & 0
  \end{pmatrix}
$$
invariant. It is composed by the $A\in E_6$ such that $A J_1=0$, which generate a ${\rm spin} (10)$ subalgebra, plus the $B\in E_6$ with $BJ_1=ib J_1$ for some
real $b$, generating the $U(1)$ factor $i\rr$. With respect to the basis we have chosen for $\jj$, the element $J_1$ is
\begin{eqnarray}
J_1=\frac 16 (3j_1+\sqrt 3 j_{18} +\sqrt 6 j_{27}).
\end{eqnarray}
Thus, the matrices of $E_6$ generating ${\rm spin} (10)$ are the ones having
\begin{eqnarray}
\vec v =\frac 16 (3\vec e_1 +\sqrt 3 \vec e_{18} + \sqrt 6 \vec e_{27})
\end{eqnarray}
in the kernel ($e_i$ are the canonical vectors of $\rr^{27}$). These are\footnote{We reorder the numbering of the matrices by calling them $L_b$
following the natural order. So, for example, $L_1=Y_4, L_2=Y_5, L_{22}=Y_{33}$ and so on. Moreover, when needed we will add multiplicative factors chosen
so that the $L_a$ have the same normalization as the $Y_i$.}
\begin{eqnarray}
  L_b=Y_a, L_{45}=\frac 12 (\sqrt 3 Y_{73}-Y_{56}),  \qquad\ a=4,\ldots, 24, 33,\ldots, 39, 48,\ldots, 55, 74,\ldots, 81, \quad\ b=1,\ldots, 44.
\end{eqnarray}
For the $U(1)$ factor in $E_6^\sigma$ from the above condition one finds the generator $(\sqrt 3 Y_{56}+Y_{73})/2$. However, we can obtain a $U(1)$ factor
also by adding any multiple of $Y_1$. In order to get an orthogonal basis (in particular orthogonal to the next generators) it is convenient to take
the generator
\begin{eqnarray}
  L_{46}=\frac 16 (3Y_{56}+\sqrt 3 Y_{73} +2\sqrt 6 Y_{1}).
\end{eqnarray}
In order to construct the ${\rm SU}(2)$ factor, note that $J_1$ is embedded
in $\jj_\cc$ and generates the real spaces $\rr j\otimes J_1$ and $\rr k\otimes J_1$, which algebraically generate the ${\rm SU}(2)$ factor in $U_5$ \cite{IY}.
As $J_1$ correspond to the vector $\vec v$ above in our basis it follows that the corresponding generators of ${\rm su}(2)$ are
\begin{eqnarray}
L_{67}=\frac 1{3\sqrt 2} (3Y_{82}+\sqrt 3 Y_{99} +\sqrt 6 Y_{2}), \qquad L_{68}=\frac 1{3\sqrt 2} (3Y_{108}+\sqrt 3 Y_{125} +\sqrt 6 Y_{3}),
\qquad L_{69}=\frac 1{3\sqrt 2} (3Y_{56}+\sqrt 3 Y_{73} -\sqrt 6 Y_{1}).
\end{eqnarray}
Finally, we are left with the 20 generators in ${\jj}_\cc^\sigma$ which are complementary to $J_1$. These are
\begin{eqnarray}
&& L_{47}=\frac 1{3\sqrt 2} (-3Y_{82}+\sqrt 3 Y_{99} +\sqrt 6 Y_{2}), \qquad L_{48}=\frac 1{3\sqrt 2} (-3Y_{108}+\sqrt 3 Y_{125} +\sqrt 6 Y_{3}), \cr
&& L_{49}=\frac {\sqrt 2}3 (-\sqrt 3 Y_{99} +\sqrt {\frac 32} Y_{2}), \qquad L_{50}=\frac {\sqrt 2}3 (-\sqrt 3 Y_{125} +\sqrt {\frac 32} Y_{3}), \cr
&& L_d=Y_b, \qquad b=100,\ldots, 107, 126,\ldots, 133, \quad\ d=51,\ldots, 66.
\end{eqnarray}
In this way we have selected the ${\rm Spin}(12)\times {\rm SU}(2)$ subalgebra and we can obtain the $E_{7(-5)}$ real form by applying the unitary Weyl trick
to the complementary generators
\begin{eqnarray}
&& L_e=Y_c, \qquad c=25,\ldots, 32, 40,\ldots, 47, 57,\ldots, 72, 83,\ldots, 98, 109,\ldots, 124, \quad\ e=70,\ldots,133.
\end{eqnarray}
\subsection{Construction of the group.}
The symmetric manifold $E_{7(-5)}/U_5$ has rank $4$, which means that we can find a Cartan subalgebra of $E_7$ with four generators in the complement
of $\lie(U_5)$. A possible choice for such a complement is
\begin{eqnarray}
H_4:=\langle L_{70}, L_{86}, L_{103}, L_{120} \rangle.
\end{eqnarray}
Let us indicate with $\mathfrak k$ the maximal Lie subalgebra of $\mathfrak u_5:= \lie (U_5)$ that commute with $H_4$. It results that the corresponding
Lie group is $K \simeq {\rm Spin}(4)\times {\rm SU}(2) \subset {\rm Spin}(12) \times \mathbb{Z}_2^4 \subset U_5$, se appendix \ref{app:ultimissima/e/basta}.
This means that we can write the group elements in the form
\begin{eqnarray}
  E_7[x_1,\ldots, x_{60};y_1,\ldots,y_4;z_1,\ldots,z_{69}]=(U_5/K)[x_1,\ldots, x_{60}] e^{H_4[y_1,\ldots,y_4]} U_5[z_1,\ldots,z_{69}],
\end{eqnarray}
where
\begin{eqnarray}
H_4[y_1,\ldots,y_4]:= y_1 L_{70}+y_2 L_{86} +y_3 L_{103} +y_4 L_{104},
\end{eqnarray}
and $U_5[z_1,\ldots,z_{69}]$ and $(U_5/K)[x_1,\ldots, x_{60}]$ are parameterizations of $U_5$ and $U_5/K$ respectively. Again we will now determine the range
for the parameters $y_i$, the ranges for $x_a$ and $z_b$ being determinable as usual.\\
Let $\mathfrak{p}$ the orthogonal complement of $\mathfrak{u}_5$ in $\lie E_7$, $\mathfrak t$ the orthogonal complement of $H_4$ in $\mathfrak p$,
and $\mathfrak s$ the orthogonal complement of $\mathfrak k$ in ${\mathfrak u}_5$.
The invariant measure over $E_7$ in this construction is
\begin{eqnarray}
&& d\mu_{E_7}=d\mu_{U_5/K} [\underline x]\ d\mu_{U_5} [\underline z]\ |h(\vec y)|\ d\vec y^4, \\
&& h(\vec y ):={\rm det} [\Pi \circ {\rm Ad}_{e^{-H_4}} : \mathfrak{s} \rightarrow \mathfrak {t} ],
\end{eqnarray}
where $\Pi$ is the orthogonal projection on $\mathfrak t$. Again, we assume to work with orthonormal bases, so that the determinant function
is well defined. Using a method similar to the one used in appendix \ref{app:range77}, one gets
\begin{eqnarray}
  |h(\vec y)|=\prod_{\beta \in {\rm Rad'}^+} \sin^{m_\beta} (|\beta(H_4(y_1,\ldots,y_4))| ),
\end{eqnarray}
where ${\rm Rad'}^+$ is the set of positive restricted roots of $E_7/U_5$ w.r.t. $H_4$, and $m_\beta$ is the multiplicity of $\beta$. The restricted
root lattice of $E_7/U_5$ is an $F_4$ lattice whose short roots have multiplicity $4$.
Thus, the equations for the range of $\vec y$ are
\begin{eqnarray}
0< |\beta(H_4(y_1,\ldots,y_4))|<\pi, \qquad\ \beta \in {\rm Rad}^+.
\end{eqnarray}
This is a set of $24$ double inequalities, which, however, can be quickly reduced to a set of five equations as follows. Indeed, all positive
roots can be obtained as non negative integer linear combinations of the simple roots. The simple roots are
\begin{eqnarray}
\alpha_1=\frac 12 (1,-1,-1,1), \qquad \alpha_2=(0,0,1,-1), \qquad \alpha_3=(-1,0,0,0), \qquad \alpha_4=(0,1,-1,0).
\end{eqnarray}
In particular, there exists a unique longest root
$$\beta_{max}=4\alpha_1+3\alpha_2+2\alpha_3+2\alpha_4$$
whose coefficients are the highest ones. From this it follows that all inequalities are a consequence of the ones corresponding to the simple roots plus the
one associated to the longest one. These are
\begin{eqnarray}
&& 0< \frac 12 (y^1-y^2-y^3+y^4) <\pi, \\
&& 0< y^3-y^4 <\pi, \\
&& 0< y^1 <\pi, \\
&& 0< y^2-y^3 <\pi, \\
&& 0< y^3+y^4 <\pi,
\end{eqnarray}
and characterize completely the range for $\vec y$.

\section{Conclusions}\label{sec:conc}
In this paper we have solved the problem of giving an explicit construction of the exceptional $E_7$ simple Lie group and in particular of its compact form.
We have solved the problem in three different ways. In the first one we have realized the generalized Euler construction w.r.t. the
maximal compact subgroup $U=(E_6\times U(1))/\mathbb{Z}_3$ of highest dimension.
To this end we have first obtained the adjoint representation of the $E_7$ Lie algebra
by using the Tits realization of the Magic Squares. This has allowed us to easily understand
the structure of the commutators and then the main properties of the algebra in relation to the subalgebra Lie($U$).
However, since the Adjoint representation of the group has a nontrivial kernel given by the center $\mathbb {Z}_2$ of $E_7$, we have  built  also
the fundamental representation ${\bf 56}$ which provides a faithful representation of
the group. For this case we have reported a very careful analysis.
Note that we can obtain a realization of $E_7/\mathbb{Z}_2$, simply by replacing the matrices in the representation ${\bf 56}$ with those in
the ${\bf 133}$. In this case, the center is mapped into $\mathbb{I}_{133}$, so that we need to restrict the range of the
$U(1)$ parameter $x_{55}$  to the interval $[0, \sqrt {\frac 23}\ \pi]$.\\
Since there exists a generalized Euler parametrization for each maximal compact subgroup, we have worked out all such parametrization. Indeed, the other two
correspond to the maximal compact subgroups  $\tilde U:={\rm SU}(8)/\mathbb{Z}_2$ and $U_5=({\rm Spin}(12)\times {\rm SU}(2))/(\mathbb{Z}_2\times \mathbb{Z}_2)$.
This is not merely an exercise. Indeed, it can be relevant to be able to recognize a specific subgroup for a given application. Moreover
from each Euler construction one can obtain the corresponding real form simply by means of the unitary Weyl trick. For these constructions
we have omitted several details, in part to avoid annoying repetitions and in part because certain specific steps revealed to have a deeper meaning
which can be understood in a more general context, which will be presented in a devoted paper \cite{to-appear}.

\vspace{5cm}

\subsection*{Acknowledgments.} We are grateful to B.L. Cerchiai and A. Marrani for very useful comments.

\newpage

\begin{appendix}
\section{The Tits product}\label{app:a}
We will follow (\cite{bart-sud}), and, for generality, we will allow for $\hh$ to be the non associative octonionic algebra.\\
Let $k\otimes j, k'\otimes j', k''\otimes j'' \in \hh' \otimes \jj'$ and define
\eqn
[h\otimes j, h'\otimes j']:= \frac \alpha{3} \langle j,j' \rangle D_{h,h'}-\beta \langle h,h'\rangle [L_j, L_{j'}]
+\gamma [h,h']\otimes (j\star j')
\feqn
for some constants $\alpha, \beta, \gamma$. Then, the only nontrivial Jacobi identities to be checked are
\eqn
&& 0=[[h\otimes j, h'\otimes j'],h''\otimes j'']+[[h''\otimes j'', h\otimes j],h'\otimes j']+[[h'\otimes j', h''\otimes j''],h\otimes j]\cr
&&\quad =:{\rm cyc}\{ [[h\otimes j, h'\otimes j'],h''\otimes j''] \}.
\feqn
Now
\eqn
[[h\otimes j, h'\otimes j'],h''\otimes j''] && =\frac {\alpha}3 \langle j, j' \rangle D_{k,k'} (k'') \otimes j''
-\beta \langle k,k' \rangle k''\otimes [L_{j}, L_{j'}] j'' +\frac {\alpha\gamma}3 \langle j\star j', j''\rangle D_{[k,k'],k''}\cr
&& -\gamma\beta \langle [k,k'], k''\rangle \otimes [L_{j\star j'},L_{j''}]+\gamma^2 [[k,k'],k''] \otimes (j\star j')\star j''.
\feqn
After cyclic summation the terms with coefficients $\alpha\gamma$ and $\beta\gamma$ disappear.
Defining the associator $[k,k',k'']=(kk')k''-k(k'k'')$ (which is totally antisymmetric)
we have the identities, which can be easily obtained adapting the
relations in \cite{bart-sud} to our notation,
\eqn
&& D_{k,k'} (k'')=[[k,k'],k'']-3[k,k',k''], \\
&& \langle c,a \rangle b-\langle c,b \rangle a =-\frac 14 [[b,a],c]+\frac 12 [c,b,a],\\
&& X\star (X\circ X)=\frac 12 \langle X,X \rangle X, \qquad X\in \jj'.
\feqn
From the last one we also get
\eqn
j\star (j'\circ j'')-\frac 12  \langle j,j' \rangle j'' +{\rm cyclic}=0.
\feqn
Using all this identities we finally get
\begin{align}
{\ {\rm cyc}}\{ [[h\otimes j, h'\otimes j'],h''\otimes j'']\}=&\left\{ [[k,k'],k''] \otimes \left[ \frac {\alpha-\gamma^2}3 \langle j,j'\rangle j''
 +\frac {4\gamma^2-\beta}4 (j\circ j')\star j'' \right] \right\} \cr
&+\frac {\beta-4\alpha}4 [k,k',k'']\otimes {\ {\rm cyc}} [\langle j,j' \rangle j''],
\end{align}
which vanishes for all choices of $k,k',k''$ and $j,j',j''$ if and only if $\alpha=\gamma^2=\frac \beta4$, which gives equation (\ref{prod}).

\section{The adjoint representation of $E_7$ in the $E_{7(-25)}$ construction}\label{sec:adjoint}
We can now realize the adjoint representation of $E_7$.
To this end, let us consider the basis $\{\Psi_A \}_{A=1}^{133}$ of $\mathfrak{g}$, with
\eqn
\Psi_A=\left\{
\begin{array}{lll}
H_L & \mbox{if} & A=1,2,3, \quad L=1,2,3 \\
J_{I} & \mbox{if} & A=I+3, \quad\ I=1,\ldots,52, \\
h_1\otimes j_{a} & \mbox{if} & A=a+55, \quad\ a=1,\ldots, 26, \\
h_2\otimes j_{a} & \mbox{if} & A=a+81, \quad\ a=1,\ldots, 26, \\
h_3\otimes j_{a} & \mbox{if} & A=a+107, \quad\ a=1,\ldots, 26.
\end{array}
\right.
\feqn
The dual basis is $\hat{\Psi}^B$, with $\hat{\Psi}^B(\Psi_A)=\delta^B_A$.
Thus, the generic matrix of $\lie (E_7)$ is $(M_A)^C_{\phantom{C}B}=\hat{\Psi}^C(ad_{\Psi_A}\Psi_B)$ where $ad_{\Psi_A}\Psi_B=[\Psi_A,\Psi_B]$
is the adjoint action of
$\Psi_A$ on $\Psi_B$. As usual, in $(M_A)^C_{\phantom{C}B}$, the upper index denote the row, the lower the column and the index $A$ labels
the matrix we are considering. We will use the first capital Latin letters $A,B,C$, running from 1 to 133, to label the basis elements of $\lie(E_7)$,
small Latin letters $i,j,k$, running from 1 to 3, to label the imaginary units of $\hh$, capital Latin letters $L,M,N$, running from 1 to 3, for
the derivations $D(\hh)$, capital Latin letters $I,J,K$, running from 1 to 52, for the derivation $D(\jj)$, first small
Latin letters $a,b,c$, running from 1 to 26, for the elements of $\jj '$ and finally the Greek letters $\mu, \nu, \lambda$, running from 1 to 27,
for the elements of $\jj$.\\
Let us proceed with the construction step by step.
\subsection{The matrices $M_A$ with $A=1,2,3$}
These matrices are determined by the adjoint action of the elements $\Psi_A= H_L$, with $L=1,2,3$ on  $\Psi_B$, $B=1,\ldots,133$.
We have to consider five cases for the index $B$.
\begin{itemize}
\item $B=1,2,3$, i.e. $M=1,2,3$: $[H_L,H_M]=2\epsilon_{LM}^{\phantom{LM}N}H_N$.
\item $B=4,\cdots,55$, i.e. $J=1,\cdots,52$: $[H_L,J_J]=0$.
\item $B=56,\cdots,81$, i.e. $b=1,\cdots,26$: $[H_L,h_1\otimes j_b]=ad_{h_i}h_1\otimes j_b=2\epsilon_{i1}^{\phantom{i1}k}h_k\otimes j_b$,
with the value of $L$ equal to that of $i$.
\item $B=82,\cdots,107$, i.e. $b=1,\cdots,26$: $[H_L,h_2\otimes j_b]=ad_{h_i}h_2\otimes j_b=2\epsilon_{i2}^{\phantom{i2}k}h_k\otimes j_b$,
with the value of $L$ equal to that of $i$.
\item $B=108,\cdots,133$, i.e. $b=1,\cdots,26$: $[H_L,h_3\otimes j_b]=ad_{h_i}h_3\otimes j_b=2\epsilon_{i3}^{\phantom{i3}k}h_k\otimes j_b$,
with the value of $L$ equal to that of $i$.
\end{itemize}
Applying the dual basis $\hat{\Psi}^C$,
and organizing the result in block matrices (with block structure $C=\{N,K,a,b,c\}$) we obtain the first three matrices:
\eqn
&& (M_1)^C_{\phantom{C}B}=
\left(
\begin{array}{c|c|c|c|c}
\begin{array}{ccc}
0&0&0 \\
0&0&-2 \\
0&2&0 \\
\end{array}
& 0 & 0 & 0 & 0 \\
\hline
0&0& 0 & 0 & 0 \\
\hline
0& 0 & 0 & 0 & 0 \\
\hline
0& 0 & 0 & 0 & - 2{I_{26}} \\
\hline
0 & 0 & 0 & 2{I_{26}} & 0 \\
\end{array}
\right), \qquad
(M_2)^C_{\phantom{C}B}=
\left(
\begin{array}{c|c|c|c|c}
\begin{array}{ccc}
0&0&2 \\
0&0&0 \\
-2&0&0 \\
\end{array}
& 0 & 0 & 0 & 0 \\
\hline
0&0& 0 & 0 & 0 \\
\hline
0& 0 & 0 & 0 & 2{I_{26}} \\
\hline
0& 0 & 0 & 0 & 0 \\
\hline
0 & 0 & -2{I_{26}} & 0 & 0\\
\end{array}
\right)\cr
&& (M_3)^C_{\phantom{C}B}=
\left(
\begin{array}{c|c|c|c|c}
\begin{array}{ccc}
0&-2&0 \\
2&0&0 \\
0&0&0 \\
\end{array}
& 0 & 0 & 0 & 0 \\
\hline
0& 0 & 0 & 0 & 0\\
\hline
0& 0 & 0 & -2{I_{26}} & 0 \\
\hline
0 & 0 & 2{I_{26}} & 0 & 0\\
\hline
0&0& 0 & 0 & 0 \\
\end{array}
\right),
\feqn
where the five diagonal blocks have dimensions $3\times 3$, $52\times 52$, $26\times 26$, $26\times 26$ and $26\times 26$ respectively.


\subsection{The matrices $M_A$ with $A=4,\cdots,55$}
These matrices are given by the adjoint action of the elements $\Psi_{I+3}= J_I$, with $I=1,\cdots, 52$ on $\Psi_B$, $B=1,\ldots,133$.
\begin{itemize}
\item $B=1,2,3$, i.e. $M=1,2,3$: $[J_I,H_M]=0$.
\item $B=4,\cdots,55$, i.e. $J=1,\cdots,52$: $[J_I,J_J]=f_{IJ}^{\phantom{IJ}K}J_K$, where $f_{IJ}^{\phantom{IJ}K}$ are the structure constants of $\lie(F_4)$.
\item $B=56,\cdots,81$, i.e. $b=1,\cdots,26$: $[J_I,h_1\otimes j_b]=h_1\otimes J_Ij_b=h_1\otimes(C_I)^c_{\phantom{c} b}j_c$. The $(C_I)^c_{\phantom{I} b}$ are the matrices of $\lie(F_4)$, the algebra of the derivation of the octonionic Jordan matrices.
    Note that $j_b$ are traceless, actually we choose as basis for $\jj$ 26 traceless matrices and the identity. As derivations vanish on the identity, $\lie(F_4)$ can be obtained as the derivations on $\jj'$.
\item $B=82,\cdots,107$, i.e. $b=1,\cdots,26$: $[J_I,h_2\otimes j_b]=h_2\otimes J_Ij_b=h_2\otimes(C_I)^c_{\phantom{c} b}j_c$.
\item $B=108,\cdots,133$, i.e. $b=1,\cdots,26$: $[J_I,h_3\otimes j_b]=h_3\otimes J_Ij_b=h_3\otimes(C_I)^c_{\phantom{c} b}j_c$.
\end{itemize}
We obtain:
\be
(M_{I+3})^C_{\phantom{C}B}=
\left(
\begin{array}{c|c|c|c|c}
0& 0 & 0 & 0 & 0 \\
\hline
0& f_{IJ}^{\phantom{IJ}K}& 0 & 0 & 0 \\
\hline
0& 0 & (C_I)^c_{\phantom{c}b} & 0 & 0 \\
\hline
0& 0 & 0 & (C_I)^c_{\phantom{c}b} & 0 \\
\hline
0 & 0 & 0 & 0 & (C_I)^c_{\phantom{c}b} \\
\end{array}
\right),
\ee
$I=1,\ldots,52$. 
\subsection{The matrices $M_A$ with $A=56,\cdots,81$}
These matrices are given by the adjoint action of the elements $\Psi_A= h_1\otimes j_a$, with $a=1,\cdots, 26$ on $\Psi_B$, $B=1,\ldots,133$.
\begin{itemize}
\item $B=1,2,3$, i.e. $M=1,2,3$: $[h_1\otimes j_a ,H_M]=-[H_M,h_1\otimes j_a]=-ad_{h_j}h_1\otimes j_a=-2\epsilon_{j1}^{\phantom{j1}k}h_k\otimes
      j_a$, with the value of $M$ equal to that of $j$.
\item $B=4,\cdots,55$, i.e. $J=1,\cdots,52$: $[h_1\otimes j_a,J_J]=-[J_J,h_1\otimes j_a]=-h_1\otimes J_Jj_a=-h_1\otimes
    (C_J)^c_{\phantom{b}a}j_c$.
\item $B=56,\cdots,81$, i.e. $b=1,\cdots,26$: $[h_1\otimes j_a,h_1\otimes j_b]=\frac 1{12}
    H_{[h_1,h_1]}<j_a,j_b>-<h_1,h_1>[L_{j_a},L_{j_b}]+\frac 12[h_1,h_1]\otimes (j_a\star
    j_b)=-<h_1,h_1>[L_{j_a},L_{j_b}]=-[L_{j_a},L_{j_b}]=-\alpha_{ab}^{\phantom{ab}K}J_K$.
The last equality follows from
$[L_{j_a},L_{j_b}]j_\mu=L_{ja}(L_{j_b}(j_\mu))-L_{j_b}(L_{j_a}(j_\mu))$.
The Jordan algebra is commutative so that the left action and the right action on $j_\mu$ coincide. On the other hand, the right action of $\jj'$ on
$\jj$ gives the 26 new elements of the extension the Lie algebra of $F_4$ to $\lie(E_6)$, so that $[L_{j_a},L_{j_b}]$ can be expressed in terms of the
26 $\lie(E_6)$
matrices that do not belong to $\lie(F_4)$:
$[L_{j_a},L_{j_b}]j_\mu=j_\lambda[\tilde{C}_a^{\phantom{a}\lambda}{}_\nu\tilde{C}_b^{\phantom{b}\nu}{}_\mu-\tilde{C}_b^{\phantom{b}
\lambda}{}_\nu\tilde{C}_a^{\phantom{a}\nu}{}_\mu]=j_\lambda[\tilde{C_a},\tilde{C_b}]^\lambda_{\phantom{\lambda}\mu}=
j_\lambda\alpha_{ab}^{\phantom{ab}K}{C_K}^\lambda_{\phantom{\lambda}\mu}=\alpha_{ab}^{\phantom{ab}K}J_Kj_\mu=
\alpha_{ab}^{\phantom{ab}K}J_Kj_c+\alpha_{ab}^{\phantom{ab}K}J_Kj_{27}=\alpha_{ab}^{\phantom{ab}K}J_Kj_c$,
because a derivation on the identity vanishes.
The third equality follows from the fact that the commutator between two of the 26 matrices of $\lie(E_6)/\lie(F_4)$,
lies in $\lie(F_4)$.
\item $B=82,\cdots,107$, i.e. $b=1,\cdots,26$: $[h_1\otimes j_a,h_2\otimes j_b]= [h_1\otimes j_a,h_1\otimes j_b]=\frac 1{12}
 H_{[h_1,h_2]}<j_a,j_b>-<h_1,h_2>[L_{j_a},L_{j_b}]+\frac 12[h_1,h_2]\otimes (j_a\star j_b)=\frac 1{12}
 H_{[h_1,h_2]}\tau\delta_{ab}+\frac 12 [h_1,h_2]\otimes
 (j_a\star j_b)=\frac 16\tau\delta_{ab}H_3+h_3\otimes j_c(\tilde{C}_a)^c_{\phantom{c}b}$. The last equality follows from
 $(j_a\star j_b)=j_a\circ j_b-\frac 13<j_a,j_b>I_3=R_{j_a}(j_b)-\frac 13<j_a,j_b>I_=(\tilde{C}_a)^\mu_{\phantom{\mu}b}j_\mu-
 \frac 13\tau\delta_{ab}I_3=(\tilde{C}_a)^c_{\phantom{c}b}j_c+(\tilde{C}_a)^{27}_{\phantom{27}b}j_{27}-
 \frac 13\tau\delta_{ab}I_3=(\tilde{C}_a)^c_{\phantom{c}b}j_c$.
The last two terms cancel because the $\star$ product gives traceless elements by definition. This fact can also be verified using the explicit
expression of the matrices $\tilde{C}_a$.
\item $B=108,\cdots,133$, i.e. $b=1,\cdots,26$: $[h_1\otimes j_a,h_3\otimes j_b]=-\frac 16\tau\delta_{ab}H_2-h_2\otimes
    j_c(\tilde{C}_a)^c_{\phantom{c}b}$.
\end{itemize}
Collecting all these results, we get
\be
(M_{a+55})^C_{\phantom{C}B}=
\left(
\begin{array}{c|c|c|c|c}
0& 0 & 0 & \frac 16\tau\delta_{3}^N\delta_{ab} & -\frac 16\tau\delta_{2}^N\delta_{ab} \\
\hline
0& 0& -\alpha_{ab}^{\phantom{ab}K} & 0 & 0 \\
\hline
0&  -(C_J)^c_{\phantom{c}a} & 0 & 0 & 0\\
\hline
-2\delta_{a}^c\delta_{3M} & 0 & 0 & 0 & -(\tilde{C}_a)^c_{\phantom{c}b} \\
\hline
2\delta_{a}^c\delta_{2M}  & 0 & 0 & (\tilde{C}_a)^c_{\phantom{c}b} & 0 \\
\end{array}
\right),
\ee
$a=1,\ldots, 26$.

\subsection{The matrices $M_A$ with $A=82,\cdots,133$}
With analogous computations as the previous case we get the last 52 matrices:
\eqn
&&(M_{a+81})^C_{\phantom{C}B}=
\left(
\begin{array}{c|c|c|c|c}
0& 0 & -\frac 16\tau\delta_{3}^N\delta_{ab} & 0 & \frac 16\tau\delta_{1}^N\delta_{ab} \\
\hline
0& 0& 0& -\alpha_{ab}^{\phantom{ab}K} & 0\\
\hline
2\delta_{a}^c\delta_{3M} & 0 & 0 & 0 & (\tilde{C}_a)^c_{\phantom{c}b}\\
\hline
0  & -(C_J)^c_{\phantom{c}a} & 0 & 0 & 0\\
\hline
-2\delta_{a}^c\delta_{1M}  & 0 & -(\tilde{C}_a)^c_{\phantom{c}b} & 0 & 0\\
\end{array}
\right),
\feqn
\eqn
&&(M_{a+107})^C_{\phantom{C}B}=
\left(
\begin{array}{c|c|c|c|c}
0& 0 & \frac 16\tau\delta_{2}^N\delta_{ab} & -\frac 16\tau\delta_{1}^N\delta_{ab} & 0\\
\hline
0& 0& 0& 0& -\alpha_{ab}^{\phantom{ab}K} \\
\hline
-2\delta_{a}^c\delta_{2M} & 0 & 0 & -(\tilde{C}_a)^c_{\phantom{c}b} & 0\\
\hline
2\delta_{a}^c\delta_{1M}  & 0 & (\tilde{C}_a)^c_{\phantom{c}b} & 0 & 0\\
\hline
0  & -(C_J)^c_{\phantom{c}a} & 0 & 0 & 0\\
\end{array}
\right),
\feqn
$a=1,\ldots, 26$.
This completes the construction of the representation $\bf{133}$.

\section{The representation \boldmath{$56$} of $E_7$}\label{sec:56}
By applying the Yokota method we obtain the following 133 $56\times 56$ matrices:
\eqn
&& Y_1=\left(
\begin{array} {c|c|c|c}
\frac i{\sqrt 6} I_{27}  & \ov &  \oo & \ov \cr
\hline
^t\ov  & -i{\sqrt {\frac 32}} & \ov^t & 0 \cr \hline
\oo  & \ov & -\frac i{\sqrt 6} I_{27} & \ov \cr \hline
^t\ov  & 0 & \ov^t & i{\sqrt {\frac 32}}
\end{array}
\right),
\feqn
\eqn
&& Y_2=\frac 12 \left(
\begin{array} {c|c|c|c}
\oo  & \ov & -i\sqrt {\frac 23} \tilde {I} & i\sqrt 2 \vec e_{27} \cr
\hline
\ov^t  & 0 & i\sqrt 2\ \vec e_{27}^t & 0 \cr
\hline
-i\sqrt {\frac 23} \tilde {I}  & i\sqrt 2 \vec e_{27} & \oo & \ov \cr
\hline
i\sqrt 2\ \vec e_{27}^t  & 0 & \ov^t & 0
\end{array}
\right),
\feqn
\eqn
&& Y_3=\frac 12 \left(
\begin{array} {c|c|c|c}
\oo  & \ov & \sqrt {\frac 23} \tilde {I} & \sqrt 2 \vec e_{27} \cr
\hline
\ov^t  & 0 & \sqrt 2\ \vec e_{27}^t & 0 \cr
\hline
-\sqrt {\frac 23} \tilde {I}  & -\sqrt 2 \vec e_{27} & \oo & \ov \cr
\hline
-\sqrt 2\ \vec e_{27}^t  & 0 & \ov^t & 0
\end{array}
\right),
\feqn
\eqn
&& Y_{I+3}=\left(
\begin{array} {c|c|c|c}
\phi_I  & \ov & \oo & \ov \cr
\hline
\ov^t  & 0 & \ov^t & 0 \cr
\hline
\oo  & \ov & -\phi_I^t & \ov \cr
\hline
\ov^t  & 0 & \ov^t & 0
\end{array}
\right), \qquad\ I=1, \ldots, 78,
\feqn
\eqn
&& Y_{a+81}=\frac 12 \left(
\begin{array} {c|c|c|c}
\oo  & \ov & 2iA_a & i\sqrt 2 \vec e_a \cr
\hline
\ov^t  & 0 & i\sqrt 2 \vec e_a^t & 0 \cr
\hline
2iA_a  & i\sqrt 2 \vec e_a & \oo & \ov \cr
\hline
i\sqrt 2 \vec e_a^t  & 0 & \ov^t & 0
\end{array}
\right), \qquad\ a=1, \ldots, 26,
\feqn
\eqn
&& Y_{a+107}=\frac 12 \left(
\begin{array} {c|c|c|c}
\oo  & \ov & -2A_a & \sqrt 2 \vec e_a \cr
\hline
\ov^t  & 0 & \sqrt 2 \vec e_a^t & 0 \cr
\hline
2A_a  & -\sqrt 2 \vec e_a & \oo & \ov \cr
\hline
-\sqrt 2 \vec e_a^t  & 0 & \ov^t & 0
\end{array}
\right), \qquad\ a=1, \ldots, 26,
\end{eqnarray}
where $I_{n}$ is the $n\times n$ identity matrix, $\oo$ is the $27\times 27$ null matrix, $\vec 0_{n}$ is the zero vector in
$\mathbb{R}^{n}$, $\vec e_\mu$, $\mu=1,\ldots, 27$, is the canonical basis of $\mathbb{R}^{27}$,
\begin{eqnarray}
\tilde {I}=\left(
\begin{array}{c|c}
I_{26} & \vec 0_{26} \cr \hline
^t\vec 0_{26} & -2
\end{array}
\right),
\end{eqnarray}
\begin{eqnarray}
\phi_I=\left\{
\begin{array}{c}
C_I \qquad\ I=1,\ldots,52 \cr
\tilde C_{I-52} \qquad\ I=53,\ldots, 78,
\end{array}
\right.
\end{eqnarray}
where $C_I$ and $\tilde C_a$ are defined in (\ref{effe4}) and (\ref{e6}) respectively.
Finally, $A_a$, $a=1,\ldots, 26$ are the $27\times 27$ symmetric matrices representing, by means of the linear isomorphism
$\jj\simeq \mathbb{R}^{27}, \ \ j_\mu \mapsto \vec e_\mu$ , the action of $\jj'$ on $\jj$
defined by
$$
j' (j):= j' \circ j-\frac 12 {\rm Tr}(j) j' -\frac 12 \langle j,j' \rangle I_3.
$$
In particular $j_\mu$ is the basis of $\jj$ defined in the previous section, and $A_a$ is the matrix associated to $j_a$, $a=1,\ldots,26$.\\
The matrices $Y_A$, $A=1,\ldots, 133$ are orthonormalized w.r.t. to the product
$$
\langle Y , Y' \rangle_{\bf{56}} = -\frac 1{12} {\rm Tr} (YY').
$$
It is easy to check, for example by means of a computer, that the $Y_A$ satisfy exactly the same commutation relations of the $M_A$.

\section{The range of the parameters for the $E_{7(-25)}$}\label{sec:parameters}
To determine the range of the parameters, we can proceed as shown in \cite{40}. In particular, we will be able to
make use of the results  already obtained in \cite{F4} and \cite{E6}, thus simplifying most of the computations.\\
Setting $U:=(E_6\times U(1))/\mathbb{Z}_3$, from our previous construction we can write the generic element of $E_7$
as
\begin{eqnarray}\label{euler}
E_7[x_1,\ldots,x_{133}]=B[x_1,\ldots, x_{51}] \exp V[x_{52}, x_{53}, x_{54}] U[x_{55},\ldots,x_{133}],
\end{eqnarray}
which is (\ref{gruppoE7}) realized with the matrices $Y_A$. Then, the range of the parameters $x_{55},\ldots,x_{133}$
can be chosen in such the way that $x_{55}$ covers the whole $U(1)$ and $x_{56}, \ldots, x_{133}$ cover the whole $E_6$.
The last ones have been determined in \cite{E6} and will be reported in the conclusions for convenience. The $U(1)$ is covered
if $x_{55}$ runs over a period $T_g$. However, because of the action of $\mathbb{Z}_3$, we must reduce its range to
$T_g/3=2\sqrt{\frac 23}\pi$.\\
For the remaining parameters, we have to construct the invariant measure over
the quotient $E_7/U$. This is given by
\begin{equation}
d\mu_{B_{E_7}}=|\det J_\p^\perp|,
\end{equation}
where
$$
J_\p=e^{-V} B^{-1} d(B e^V)=dV +e^{-V} B^{-1} dB e^V
$$
and $\perp$ means the part orthogonal to Lie($U$). Concretely, this means that we have to project $J_\p$ on
$Y_2$, $Y_3$, $Y_{82},\ldots,Y_{133}$.
Since
\eqn
dV =dx_{52} Y_2+dx_{53} Y_{82}+dx_{54} Y_{99},
\feqn
we only need to concentrate on the term $e^{-V} B^{-1} dB e^V$, which must be projected on
$Y_3$, $Y_{83},\ldots, Y_{98}$, $Y_{100},\ldots, Y_{133}$.
The details of the computations are given in Appendix \ref{app:measure}. However, to express the result of the computation
we need to look better at the structure of $B=U/U_o$.\\
We can write it as $B=U(1)/\mathbb{Z}_3\times E_6/SO(8)$, where $SO(8)$ is the subgroup $SO(8)\subset F_4 \subset E_6$
which commutes with $Y_{56}:=\psi(c_{53})$ and $Y_{73}:=\psi(c_{70})$, used in \cite{E6} to construct $E_6$. By using
the results in \cite{E6}, \cite{F4}, we then see that
\begin{eqnarray}
B[x_{1},\ldots, x_{51}] =e^{x_1 Y_1}\psi_*(B_{E_6})[x_2,\ldots, x_{27}] \psi_*(B_{F_4})[x_{28},\ldots, x_{43}] \psi_*(B_{SO(9)})[x_{44},\ldots, x_{51}],
\end{eqnarray}
and $B_{E_6}$, $B_{F_4}$ and $B_{SO(9)}$ are the basis in the construction of the groups
\begin{eqnarray}
E_6=B_{E_6} F_4, \qquad\ F_4=B_{F_4} SO(9), \qquad\ SO(9)=B_{SO(9)} SO(8),
\end{eqnarray}
defined in \cite{E6} and \cite{F4} respectively. Also, $\psi$ indicates the map $\psi(c_I)=Y_{I+3}$, $I=1,\ldots,78$, $c_I$ being the generators
of $E_6$. From now on we will omit the map $\psi$, for simplicity.
Starting from this structure, as shown in Appendix \ref{app:measure},
we get
\begin{eqnarray}
&& d\mu_{B_{E_7}}(x_1,\ldots,x_{55})= dx_1 d\mu_{B_{E_6}}(x_2,\ldots, x_{27}) d\mu_{B_{F_4}}(x_{28},\ldots, x_{43})
d\mu_{B_{SO(9)}}(x_{44},\ldots, x_{51})\cr
&& \phantom{d\mu_{B_{E_7}}(x_1,\ldots,x_{55})=}
W(x_{52}, x_{53}, x_{54}) dx_{52} dx_{53} dx_{54}, \label{measure}\\
&& W(x_{1}, x_{2}, x_{3}):= \sin \left( \frac {\sqrt 3 x_2 +\sqrt 2 x_1 +x_3}{\sqrt 3}\right)
\sin \left( \frac {\sqrt 3 x_2 -\sqrt 2 x_1 -x_3}{\sqrt 3}\right)
\sin \left( \frac {\sqrt 6 x_1 -2\sqrt 3 x_3}{3}\right)\cr
&& \phantom{W(x_{1}, x_{2}, x_{3}):=}
\sin \left( \frac {\sqrt 3 x_2 +2\sqrt 2 x_1 -x_3}{2\sqrt 3}\right)^8
\sin \left( \frac {\sqrt 3 x_2 -2\sqrt 2 x_1 +x_3}{2\sqrt 3}\right)^8
\sin \left( \frac {x_2 -\sqrt 3 x_3}{2}\right)^8 \cr
&& \phantom{W(x_{1}, x_{2}, x_{3}):= }\sin \left( \frac {x_2 +\sqrt 3 x_3}{2}\right)^8
\sin \left( \frac {\sqrt 2 x_1 +x_3}{\sqrt 3}\right)^8
\sin(x_2)^8.
\end{eqnarray}
At this point we have to fix the ranges so that $d\mu$ is positive definite. This fixes only the ranges of the parameters which
explicitly appear in the measure. The other ones can be chosen over a period. In this way one covers the whole space an integer
number $N$ of times. This can be checked by comparing the volume obtained by integrating $d\mu$ over the obtained ranges, with the volume
of $E_7/U$ obtained by means of the Macdonald's formula (see \cite{Mac}). Finally one has to identify the finite symmetry group responsible
of the $N$-covering to reduce the ranges over the periods to get the $N=1$ case. Fortunately, in our case most of the work has been
done in \cite{E6} and \cite{F4}. Indeed, the terms $d\mu_{B_{E6}}$, $d\mu_{B_{F_4}}$ $d\mu_{B_{SO(9)}}$ are the same appearing there and
determine the same corresponding ranges of parameters given in that papers. Moreover, $x_1$ does not appears in $d\mu$, so that
its range is the period of $U(1)/\mathbb{Z}_3$ which is $T_g/3=2\sqrt{\frac 23}\pi$. Thus, we are left with the problem to determine the
ranges for the parameters $x_{52}, x_{53}, x_{54}$. To this end, it is convenient to introduce the change of variables
\begin{eqnarray}
&& x= \frac {\sqrt 6 x_{52} -2\sqrt 3 x_{54}}{3}, \qquad y=\frac {\sqrt 3 x_{53} +\sqrt 2 x_{52} +x_{54}}{\sqrt 3},\qquad
z=\frac {\sqrt 3 x_{53} -\sqrt 2 x_{52} -x_{54}}{\sqrt 3},
\end{eqnarray}
with inverse:
\begin{eqnarray}
&& x_{52}= \frac {1}{\sqrt 6} x -\frac{1}{\sqrt 3}z, \qquad x_{53}=\frac 12(y +z),\qquad x_{54}=-\frac 13x +\frac{1}{2\sqrt 3}(y-z).
\end{eqnarray}
Then,
\begin{eqnarray}
W(x_{52}, x_{53}, x_{54})&=&\sin(x)\sin(y)\sin(z)\sin^8(\frac{x-y}{2}) \sin^8(\frac{x+y}{2}) \cr
&&\sin^8(\frac{x-z}{2}) \sin^8(\frac{x+z}{2}) \sin^8(\frac{y-z}{2})
\sin^8(\frac{y+z}{2}),
\end{eqnarray}
which is positive for
\begin{eqnarray}
x\in[0,\pi], \qquad y\in [0,x], \qquad z\in [0,y].
\end{eqnarray}
To check if the range of parameters is the right one, let us compute the integral
$$
{\mathcal I}=\int_R d\mu_{B_{E_7}},
$$
with the determined ranges, to be compared with the volume ${\rm Vol}(E_7/U)$ computed in Appendix \ref{app:macdonald} by means
of the formula of Macdonald. As
$$
dx_{52}dx_{53}dx_{54}= \frac 1{2\sqrt 2} dx dy dz,
$$
we get
\be
{\mathcal I}=\frac {T_g}3 \frac {{\rm Vol}(E_6)}{{\rm Vol}(SO(8))}\frac{1}{2\sqrt{2}} I,
\ee
\begin{align}
I=\!\! \int_0^{\pi}\! dx \int_0^x\!\! dy \int_0^y\!\! dz &\sin(x)\sin(y)\sin(z)\sin^8(\frac{x-y}{2}) \sin^8(\frac{x+y}{2})\cr
&\sin^8(\frac{x-z}{2}) \sin^8(\frac{x+z}{2}) \sin^8(\frac{y-z}{2})
\sin^8(\frac{y+z}{2}).
\end{align}
Using the change of variables
\begin{eqnarray}
\chi= \frac 12 (1-\cos x), \quad\ \xi =\frac 12 (1-\cos y), \quad\ \zeta=\frac 12 (1-\cos z),
\end{eqnarray}
the last integral becomes, see Appendix \ref{app:integral}:
\begin{eqnarray}
I=2^3 \int_0^1 d\chi \int_0^\chi d\xi \int_0^\xi d\zeta (\chi-\xi)^8(\xi-\zeta)^8(\zeta-\chi)^8=
\frac {2}{3^5 \cdot 5 \cdot 11 \cdot 13^2 \cdot 17}. \label{int}
\end{eqnarray}
Since the volumes of $E_6$ and $SO(8)$ can be computed by means of the Macdonald formula giving
\begin{eqnarray}
{\rm Vol}(E_6)=\frac {\sqrt 3\ 2^{17} \pi^{42}}{3^{10} \cdot 5^5 \cdot 7^3 \cdot 11}, \qquad\
{\rm Vol}(SO(8))=\frac {2^{12}\pi^{16}}{3^3\cdot 5},
\end{eqnarray}
we get
\begin{eqnarray}
{\mathcal I}=\frac {2^6 \pi^{27}}{3^{12} \cdot 5^5 \cdot 7^3 \cdot 11^2 \cdot 13^2 \cdot 17},
\end{eqnarray}
which is just twice the volume of $E_7/U$ computed in Appendix \ref{app:macdonald}. However, we can easily see the origin
of this double covering:
consider the element
\begin{eqnarray}
\tau:= -I_{56} \omega^2 =g_1[T_g/2] g_1[2T_g/3]= U(1)[T_g/6], \qquad\ \tau^6=I_{56}.
\end{eqnarray}
One can check that $\tau$ commutes with $E_6$ and with $V$. This means that we can write
\begin{eqnarray}
B e^V U=B\tau^{-1} \tau e^V U =B\tau^{-1} e^V \tau U.
\end{eqnarray}
Since the right $\tau$ can be reabsorbed in the parametrization of $U$, and, on the left, it acts as $\mathbb{Z}_2$ on the factor $U(1)/\mathbb{Z}_3$
in $B$, we see that it identifies the points in the range by means of the relation
$$
x_1 \sim x_1+T_g/6.
$$
Then, to have an injective parametrization, apart from a subset of vanishing measure, we must further restrict the range of
$x_1$ to the interval $x_1\in [0,\sqrt{\frac 23} \pi]$.

\

As a result, we find the following ranges for the $E_7$ Euler parameters:
\begin{align*}
&x_1\in[0,\sqrt{\frac 23}\pi]& & x_2\in[0,2\pi] && x_3\in[0,2\pi] &&
x_4\in[0,2\pi] &&x_5\in[0,\pi] &&
x_6\in[-\frac{\pi}{2},\frac{\pi}{2}]\\
& x_7\in[0,\frac{\pi}{2}]
&&x_8\in[0,\frac{\pi}{2}]&&x_9\in[0,\pi]&&x_{10}\in[0,2\pi]&&x_{11}\in[0,2\pi]
&&x_{12}\in[0,2\pi]\\
&
x_{13}\in[0,\pi]&&x_{14}\in[-\frac{\pi}{2},\frac{\pi}{2}]
&&x_{15}\in[0,\frac{\pi}{2}] &&
x_{16}\in[0,\frac{\pi}{2}]&&x_{17}\in[0,\pi]&&x_{18}\in[0,2\pi]\\
&x_{19}\in[0,2\pi]&&x_{20}\in[0,2\pi]&&x_{21}\in[0,\pi]
&&x_{22}\in[-\frac{\pi}{2},\frac{\pi}{2}]&&x_{23}\in[0,\frac{\pi}{2}]&&x_{24}\in[0,\frac{\pi}{2}]\\
&x_{25}\in[0,\pi]
&&x_{26}\in[0,\pi]&&-\frac{x_{26}}{\sqrt{3}}\leq
x_{27}\leq \frac{x_{26}}{\sqrt{3}}&& x_{28}\in[0,2\pi]
&&x_{29}\in[0,2\pi] &&
x_{30}\in[0,2\pi]\\
&x_{31}\in[0,\pi] &&
x_{32}\in[-\frac{\pi}{2},\frac{\pi}{2}] && x_{33}\in[0,\frac{\pi}{2}]
&&x_{34}\in[0,\frac{\pi}{2}]&&x_{35}\in[0,\pi]
&&x_{36}\in[0,2\pi]\\
&x_{37}\in[0,2\pi]&&
x_{38}\in[0,2\pi]&&
x_{39}\in[0,\pi]&&x_{40}\in[-\frac{\pi}{2},\frac{\pi}{2}]&&
x_{41}\in[0,\frac{\pi}{2}] &&
x_{42}\in[0,\frac{\pi}{2}]\\
&x_{43}\in[0,\pi]&&
 x_{44}\in[0,2\pi] && x_{45}\in[0,2\pi] &&
x_{46}\in[0,2\pi] &&x_{47}\in[0,\pi] &&
x_{48}\in[-\frac{\pi}{2},\frac{\pi}{2}]\\
& x_{49}\in[0,\frac{\pi}{2}]
&&x_{50}\in[0,\frac{\pi}{2}]&& x_{51}\in[0,\pi]
&&x_{55}\in[0,2\sqrt{\frac 23}\pi] &&
 x_{56}\in[0,2\pi] && x_{57}\in[0,2\pi]\\
& x_{58}\in[0,2\pi] &&x_{59}\in[0,\pi]
&& x_{60}\in[-\frac{\pi}{2},\frac{\pi}{2}] && x_{61}\in[0,\frac{\pi}{2}]
&&x_{62}\in[0,\frac{\pi}{2}]&& x_{63}\in[0,\pi]\\
& x_{64}\in[0,2\pi]&& x_{65}\in[0,2\pi]
&&x_{66}\in[0,2\pi]
&& x_{67}\in[0,\pi]&&x_{68}\in[-\frac{\pi}{2},\frac{\pi}{2}]&&
x_{69}\in[0,\frac{\pi}{2}] \\
& x_{70}\in[0,\frac{\pi}{2}]&&x_{71}\in[0,\pi]&& x_{72}\in[0,2\pi]&&
x_{73}\in[0,2\pi]
&&x_{74}\in[0,2\pi]&&x _{75}\in[0,\pi]\\
& x_{76}\in[-\frac{\pi}{2},\frac{\pi}{2}]&&
x_{77}\in[0,\frac{\pi}{2}]&&x_{78}\in[0,\frac{\pi}{2}]&&
x_{79}\in[0,\pi]&& x_{80}\in[0,\pi] && -\frac{x_{80}}{\sqrt{3}}\leq
 x_{81}\leq \frac{x_{80}}{\sqrt{3}}\\
& x_{82}\in[0,2\pi] && x_{83}\in[0,2\pi] &&
x_{84}\in[0,2\pi] &&x_{85}\in[0,\pi] &&
 x_{86}\in[-\frac{\pi}{2},\frac{\pi}{2}] && x_{87}\in[0,\frac{\pi}{2}]
 \\
&x_{88}\in[0,\frac{\pi}{2}]&&x_{89}\in[0,\pi]&&x_{90}\in[0,2\pi]&&x_{91}\in[0,2\pi]
&&x_{92}\in[0,2\pi]&&
x_{93}\in[0,\pi]\\
&x_{94}\in[-\frac{\pi}{2},\frac{\pi}{2}]
&& x_{95}\in[0,\frac{\pi}{2}] &&
 x_{96}\in[0,\frac{\pi}{2}]&& x_{97}\in[0,\pi] &&
x_{98}\in[0,2\pi]& & x_{99}\in[0,2\pi] \\
& x_{100}\in[0,2\pi] &&x_{101}\in[0,\pi]
&& x_{102}\in[-\frac{\pi}{2},\frac{\pi}{2}] && x_{103}\in[0,\frac{\pi}{2}]
&&x_{104}\in[0,\frac{\pi}{2}]&&x_{105}\in[0,\pi]\\
 & x_{106}\in[0,2\pi] && x_{107}\in[0,2\pi] &&
x_{108}\in[0,2\pi]
&&x_{109}\in[0,\pi] &&
 x_{110}\in[-\frac{\pi}{2},\frac{\pi}{2}] &&
 x_{111}\in[0,\frac{\pi}{2}] \\
&x_{112}\in[0,\frac{\pi}{2}]&&
x_{113}\in[0,2\pi]&& x_{114}\in[0,\pi]&& x_{115}\in[-\frac{\pi}{2},\frac{\pi}{2}]& & x_{116}\in[-\frac{\pi}{2},\frac{\pi}{2}]&&
x_{117}\in[-\frac{\pi}{2},\frac{\pi}{2}] \\
& x_{118}\in[0,\pi] &&
 x_{119}\in[0,2\pi]&& x_{120}\in[0,\pi]&& x_{121}\in[-\frac{\pi}{2},\frac{\pi}{2}]
&&x_{122}\in[-\frac{\pi}{2},\frac{\pi}{2}]
&&x_{123}\in[0,\pi] \\
& x_{124}\in[0,2\pi]&& x_{125}\in[0,\pi]
&&x_{126}\in[-\frac{\pi}{2},\frac{\pi}{2}] &&x_{127}\in[0,\pi] &&
x_{128}\in[0,2\pi]&& x_{129}\in[0,\pi]\\
&x_{130}\in[0,\pi] &&
 x_{131 }\in[0,2\pi]&& x_{132}\in[0,\pi]
&&x_{133}\in[0,4\pi]
\end{align*}
and
\begin{align*}
& x_{52}= \frac {1}{\sqrt 6} x -\frac{1}{\sqrt 3}z, && x_{53}=\frac
12(y +z), && x_{54}=-\frac 13x +\frac{1}{2\sqrt 3}(y-z),
\end{align*}
with $x\in[0,\pi]$, $y\in[0,x]$ and $z\in[0,y]$.

\section{Deduction of the measure}\label{app:measure}
The invariant measure on $E_7/U$ is
\begin{equation}
d\mu_{B_{E_7}}=|\det J_\p^\perp|,
\end{equation}
where $J_\p^\perp$ is the projection of
$$
J_\p=e^{-V} B^{-1} d(B e^V)=dV +e^{-V} B^{-1} dB e^V
$$
on the real linear subspace of Lie($E_7$) spanned by $Y_2,Y_3,Y_{82},\ldots,Y_{133}$.
More specifically we have
\begin{align*}
J_\p=&dV+ e^{-V}U_1^{-1}dU_1e^V+e^{-V}B_{SO(9)}^{-1}d B_{SO(9)} e^V +e^{-V}B_{SO(9)}^{-1}B_{F_4}^{-1}d B_{F_4} B_{SO(9)} e^V \cr
&+e^{-V}B_{SO(9)}^{-1}B_{F_4}^{-1}B_{E_6}^{-1}d B_{E_6} B_{F_4} B_{SO(9)} e^V,
\end{align*}
where we used the fact that the $U(1)$ factor commutes with $E_6$. To compute $\det J_\p^\perp$ we can then proceed by analyzing
the summands term by term (here we will set $H:={\rm Lie}(U)$):
\begin{itemize}
\item
$dV=Y_2dx_2+Y_{82}dx_{82}+Y_{99}dx_{99}$, which has obviously non zero
projection just on the matrices $Y_2$, $Y_{82}$ and $Y_{99}$. Thus, the remaining terms have to be projected on
$Y_3, Y_{83},\ldots, Y_{98}, Y_{100},\ldots, Y_{133}$.
\item
The $SO(9)$ current $J_{B_{SO(9)}}:=B_{SO(9)}^{-1}d B_{SO(9)}$ can be
split in two orthogonal pieces $J_{B_{SO(9)}}=J_{SO(8)}\oplus J_{SO(9)\backslash SO(8)}$, where $J_{SO(8)}$ is the projection
of $J_{B_{SO(9)}}$ into Lie($SO(8)$). The space $V$ commute with this $SO(8)$, thus
$e^{-V}J_{SO(8)}e^{V}=J_{SO(8)}$ and as $SO(8)\subset E_6$ it has vanishing projection over $H^\perp$. Then $e^{-V}B_{SO(9)}^{-1}d
B_{SO(9)} e^V|_{H^\perp}=e^{-V}J_{SO(9)\backslash SO(8)}e^V|_{H^\perp}$. Thus, this term will contribute to the determinant with the
term
$$
A=\det e^{-V}J_{SO(9)\backslash SO(8)}e^V|_{H^\perp}.
$$
To compute it we first note that $J_{SO(9)\backslash SO(8)}=\sum_{a=1}^8 J_{B_{SO(9)}}^a Y_{a+47}$
and that $Ad_{e^{-V}} Y_{a+47}|_{H^\perp}$, $a=1,\ldots,8$ have non vanishing components only on the subspace generated by
$Y_{a+99}$, $a=1,\ldots, 8$. This means that if we define the eight dimensional spaces
\begin{eqnarray}
W_{47}:=\sum_{a=1}^8 \mathbb{R} Y_{a+47}, \qquad\ W_{99}:=\sum_{a=1}^8 \mathbb{R} Y_{a+99},
\end{eqnarray}
define the linear map
\begin{eqnarray}
\rho: W_{47} \longrightarrow W_{99},\ Y\longmapsto e^{-V} Y e^{V}|_{H^\perp}
\end{eqnarray}
and define the matrix $R$ associated to $\rho$ by the bases $\{ Y_{a+n} \}_{a=1}^8$, $n=47,99$ of $W_n$ respectively, then we can
write $A=\det R \det J_{SO(9)\backslash SO(8)}$.
Now, $\det R$ can be easily computed by means of a computer, whereas $\det J_{SO(9)\backslash SO(8)}=d\mu_{B_{SO(9)}}$ by construction
and has yet been computed in \cite{F4}. The result is then
\begin{eqnarray}
A=\sin^8\left(\frac {x_2-\sqrt 3 x_3}2\right) d\mu_{B_{SO(9)}}.
\end{eqnarray}
Moreover, the remaining terms have to be projected on
$H_1^\perp :={\rm Span}\{Y_3, Y_{83},\ldots, Y_{98}, Y_{108},\ldots, Y_{133}\}_{\mathbb{R}}$.
\item
Let us now consider the term $e^{-V}B_{SO(9)}^{-1}B_{F_4}^{-1}d B_{F_4} B_{SO(9)} e^V$. The $F_4$ current $J_{B_{F_4}}:=B_{F_4}^{-1}d B_{F_4}$
can be split in two orthogonal parts $J_{B_{F_4}}=J_{SO(9)}\oplus J_{F_4\backslash SO(9)}$, where $J_{SO(9)}$ is the projection
of $J_{B_{F_4}}$ over Lie($SO(9)$). The adjoint action of $SO(9)$ on $J_{SO(9)}$ has value in Lie($SO(9)$) which under $Ad_{e^{-V}}$
has vanishing projection over $H_1^\perp$, as seen before. On the other hand,
$$
\Phi:=Ad_{SO(9)} : {\rm Lie}(F_4/SO(9))\longrightarrow {\rm Lie}(F_4/SO(9))
$$
acts as an orthogonal map so that we have
\begin{align*}
B:=&\det (e^{-V}B_{SO(9)}^{-1}B_{F_4}^{-1}d B_{F_4} B_{SO(9)} e^V)|_{H_1^\perp}=\det(e^{-V}\Phi(J_{F_4\backslash SO(9)}) e^V)|_{H_1^\perp}\\
=&\det \Phi \ \det(e^{-V}J_{F_4\backslash SO(9)} e^V)|_{H_1^\perp}=\det(e^{-V}J_{F_4\backslash SO(9)} e^V)|_{H_1^\perp}.
\end{align*}
Now we can proceed as in the previous case: first, one can check that
\eqn
J_{F_4\backslash SO(9)}\in W_{24,39}= {\rm Span}\{Y_{24+a}, Y_{39+a}, a=1,\ldots,8 \}_{\mathbb{R}}.
\feqn
Then, one sees that
$$
Ad_{e^{-V}} (W_{24,39})|_{H_1^\perp}\subseteq W_{82}:={\rm Span}\{Y_{82+a}, a=1,\ldots,16 \}_{\mathbb{R}},
$$
so that if we define the linear map
\begin{eqnarray}
\rho' W_{24,39} \longrightarrow W_{82},\ Y\longmapsto e^{-V} Y e^{V}|_{H_1^\perp},
\end{eqnarray}
and $M'$ is the associated matrix, then we get
\begin{eqnarray}
B=\det M' \det J_{F_4\backslash SO(9)}=\sin^8 x_2 \sin^8 \left(\frac {x_2+\sqrt 3 x_3}2\right) d\mu_{B_{F_4}},
\end{eqnarray}
where $d\mu_{B_{F_4}}$ has been computed in \cite{F4}.
The remaining terms in $J_\p$ must be projected on $H_2^\perp :={\rm Span}\{Y_3, Y_{108},\ldots, Y_{133}\}_{\mathbb{R}}$.
\item The computation we have to consider is the contribution of the terms
\begin{eqnarray*}
e^{-V}U_1^{-1}dU_1e^V+e^{-V}B_{SO(9)}^{-1}B_{F_4}^{-1}B_{E_6}^{-1}d B_{E_6} B_{F_4} B_{SO(9)} e^V,
\end{eqnarray*}
which we rewrite conveniently in the form
\begin{eqnarray*}
e^{-V}B_{SO(9)}^{-1}B_{F_4}^{-1}(J_{U(1)}\oplus J_{B_{E_6}}) B_{F_4} B_{SO(9)} e^V:=
e^{-V}B_{SO(9)}^{-1}B_{F_4}^{-1}(U_1^{-1}dU_1+ B_{E_6}^{-1}d B_{E_6}) B_{F_4} B_{SO(9)} e^V.
\end{eqnarray*}
As before, $J_{B_{E_6}}$ can be split in two orthogonal parts as $J_{B_{E_6}}=J_{F_4}\oplus J_{E_6\backslash F_4}$, where $J_{F_4}$ is the projection
of $J_{B_{E_6}}$ over Lie($F_4$), and by construction
$$
e^{-V}B_{SO(9)}^{-1}B_{F_4}^{-1} J_{F_4} B_{F_4} B_{SO(9)} e^V|_{H_2^\perp}=0.
$$
Moreover, $Ad_{F_4}$ (and then $Ad_{SO(9)}$) acts as an orthogonal map on the space
$$
W_{1,55}:={\rm Span}\{Y_{1}, Y_{55+a}, a=1,\ldots, 26 \}_{\mathbb{R}}
$$
so that, with the same arguments as before, we can restrict to compute the term
\begin{eqnarray}
C:=\det(e^{-V}(U_1^{-1}dU_1\oplus J_{E_6\backslash F_4}) e^V|_{H_2^\perp}).
\end{eqnarray}
If we construct the map
\begin{eqnarray}
\rho'': W_{1,55}\longrightarrow H_2^\perp, \ Y\longmapsto e^{-V} Y e^{V}|_{H_2^\perp}
\end{eqnarray}
and the natural associated matrix $M''$, then we get
\begin{eqnarray}
&& C=\det M'' dx_1 \det J_{E_6\backslash F_4}=
\frac 1{2^9} \left( \cos (2{x_2})-\cos \frac {2\sqrt 2 x_1 +2x_3}{\sqrt 3} \right)
\sin \frac {\sqrt 6 x_1-2\sqrt 3 x_3}3 \cdot  \cr
&& \phantom{C=\det M" dx_1 \det J_{E_6\backslash F_4}=} \cdot \left( \cos x_2 -\cos \frac {2\sqrt 6 x_1-\sqrt 3 x_3}3  \right) dx_1 d\mu_{B_{E_6}},
\end{eqnarray}
where $d\mu_{B_{E_6}}$ is the measure computed in \cite{E6}.
\end{itemize}
Collecting all the terms, we finally get $d\mu_{B_{E_7}}=ABC$, which after application of the formulas of prosthapheresis give
the result (\ref{measure}).

\section{Macdonald formulas} \label{app:macdonald}
We can compute the volume of the group by means of the Macdonald formula, see \cite{Mac}. We are using the invariant measure induced
by an invariant scalar product on the algebra. The Cartan subalgebra $\mathcal{C}$ is generated by the matrices
$Y_1, Y_4, Y_9, Y_{18}, Y_{39}, Y_{56}, Y_{73}$. These matrices are orthonormal, so that the same holds true for the dual basis.
In this normalization the roots have length $\sqrt 2$. In particular, by diagonalizing $ad_{Y_i}$, $i=1,4,9,18,39,56,73$ simultaneously,
one sees that any choice of simple roots can be written in the form
\begin{eqnarray*}
&& \alpha_1= \frac 12 (\lambda_1-\lambda_2-\lambda_3-\lambda_4-\lambda_5-\lambda_6+\sqrt 2 \lambda_7),\\
&& \alpha_2=\lambda_1+\lambda_2,\\
&& \alpha_3=\lambda_2-\lambda_1,\\
&& \alpha_4=\lambda_3-\lambda_2,\\
&& \alpha_5=\lambda_4-\lambda_3,\\
&& \alpha_6=\lambda_5-\lambda_4,\\
&& \alpha_7=\lambda_6-\lambda_5,
\end{eqnarray*}
where $\lambda_j$ is an orthonormal basis (w.r.t. the product induced by the one on $\mathcal{C}$).
The coroots $\alpha_i^{\vee}$ then coincide with the roots, and the fundamental region, that is the fundamental torus generated on
$\mathcal{C}$ by the root lattice, has then volume
$$
V_T=|\alpha_1\wedge \ldots \wedge \alpha_7|=\sqrt 2.
$$
Moreover, the rational cohomology of $E_7$ is same of the product of seven spheres \cite{chevalley}, \cite{Borel-ch}:
$$
H(E_7;\mathbb{Q})=H(S^3\times S^{11}\times S^{15}\times S^{19}\times S^{23}\times S^{27}\times S^{35};\mathbb{Q}).
$$
Applying the formula of Macdonald, we get
\begin{eqnarray}
{\rm Vol}(E_7)=V_T \prod_{i=1}^7 {\rm Vol}(S^{d_i}) \prod_{\alpha\neq 0} |\alpha^\vee|=
\frac {\sqrt 2 \cdot 2^{23} \pi^{70}}{3^{22}\cdot 5^{10}\cdot 7^6 \cdot 11^3\cdot 13^2 \cdot 17},
\end{eqnarray}
where $S^{d_i}$ are the spheres appearing in the cohomology, and the second product is over all non vanishing roots.\\
In a similar way, we can compute the volume of the subgroups needed in the computations. In particular:
\begin{eqnarray}
&& {\rm Vol}(E_6)=\frac {\sqrt 3\ 2^{17} \pi^{42}}{3^{10} \cdot 5^5 \cdot 7^3 \cdot 11},\\
&& {\rm Vol}(U)={\rm Vol}(E_6) {\rm Vol}(U(1))/3=\frac {\sqrt 2\ 2^{18} \pi^{43}}{3^{10} \cdot 5^5 \cdot 7^3 \cdot 11},
\end{eqnarray}
and then
\begin{eqnarray}
{\rm Vol}(E_7/U)=\frac {2^5 \pi^{27}}{3^{12} \cdot 5^5 \cdot 7^3 \cdot 11^2 \cdot 13^2 \cdot 17}.
\end{eqnarray}


\section{Computation of the integral} \label{app:integral}
To compute the integral $I$ in (\ref{int}), we here compute the more general integral
\begin{eqnarray}
I(a,b,c)=\int_0^1 dx \int_0^x dy \int_0^y dz (x-y)^{a-1}(y-z)^{b-1}(x-z)^{c-1},
\end{eqnarray}
and we will get $I$ from the the relation $I=2^3 I(9,9,9)$. To this hand we will use the following useful representation
of the hypergeometric function:
\begin{eqnarray}
_2F_1(\alpha,\beta;\gamma;z)\equiv F(\alpha,\beta;\gamma;z) =\frac {\Gamma (\gamma)}{\Gamma (d) \Gamma (\gamma-d)}
\int_0^1 dt\ t^{d-1} (1-t)^{\gamma-d-1} F(\alpha,\beta; d; zt), \label{hyp1}
\end{eqnarray}
which for $d=\alpha$ takes the form
\begin{eqnarray}
F(\alpha,\beta;\gamma;z)= \frac {\Gamma (\gamma)}{\Gamma (\alpha) \Gamma (\gamma-\alpha)} \int_0^1 dt\ t^{\alpha-1} (1-t)^{\gamma-\alpha-1}
(1-zt)^{-\beta}. \label{hyp2}
\end{eqnarray}
First, let us change the variables by introducing the new coordinates $(x,s,t)$ such that $(x,y,z)=(x,xs,xst)$. This gives
\begin{eqnarray}
I(a,b,c)=\frac 1{a+b+c} \int_0^1 ds \int_0^1 dt s^b(1-s)^{a-1}(1-t)^{b-1}(1-st)^{c-1}.
\end{eqnarray}
Next we can first integrate over $t$. By using (\ref{hyp2}), we get
\begin{eqnarray}
&&I(a,b,c)=\frac 1b \frac 1{a+b+c} \int_0^1 ds\ s^b(1-s)^{a-1} F(1,1-c;b+1;s)\cr
&&\phantom{I(a,b,c)}=\frac 1b \frac 1{a+b+c} \frac {\Gamma(a)\Gamma(b+1)}{\Gamma(a+b+1)}
\left[\frac {\Gamma(a+b+1)}{\Gamma(a)\Gamma(b+1)} \int_0^1 ds\ s^b(1-s)^{a-1} F(1,1-c;b+1;s)\right].
\end{eqnarray}
The therm in the square brackets has exactly the form (\ref{hyp1}), with $\alpha=1,\ \beta=1-c,\ d=b+1,\ \gamma=a+b+1,\ z=1$.
Then we can write
\begin{eqnarray}
I(a,b,c)=\frac 1b \frac 1{a+b+c} \frac {\Gamma(a)\Gamma(b+1)}{\Gamma(a+b+1)} F(1, 1-c; a+b+1; 1).
\end{eqnarray}
Finally, by using
\begin{eqnarray}
F(\alpha,\beta;\gamma;1)= \frac {\Gamma (\gamma) \Gamma(\gamma-\alpha-\beta)}{\Gamma (\gamma-\alpha) \Gamma (\gamma-\beta)},
\end{eqnarray}
and the property $z\Gamma(z)=\Gamma(z+1)$, we find
\begin{eqnarray}
I(a,b,c)=\frac 1{a+b+c}\ \frac 1{a+b+c-1}\ \frac {\Gamma(a)\Gamma(b)}{\Gamma(a+b)}.
\end{eqnarray}
In particular:
\begin{eqnarray}
I=2^3 I(9,9,9)=\frac {2^2}{3^3 \cdot 13} \frac {(8!)^2}{17!}=\frac {2}{3^5 \cdot 5 \cdot 11 \cdot 13^2 \cdot 17}.
\end{eqnarray}

\section{Roots and range of parameters for the $E_{7(7)}$ construction} \label{app:range77}
To our aim we need to perform a choice of positive roots w.r.t. the Cartan subalgebra $H=<D_\alpha>_{\mathbb{R}}$.
First, we can write
\begin{eqnarray*}
&& L=H\oplus <J^+>_{\mathbb R} \oplus <J^->_{\mathbb R}:= H\oplus <\{ J^+_{kl}=\frac 1{\sqrt 2} (-iS_{kl}+A_{kl})\, |\, k<l  \}>_{\mathbb R}
\oplus <\{J^-_{kl}=\frac 1{\sqrt 2} (-iS_{kl}-A_{kl})\, |\, k<l\}>_{\mathbb R},\\
&& \lambda^4:=<{\mathcal J}>_{\mathbb R}:= <\{ {\mathcal J}_I=\lambda_I| I\in {\mathcal I} \}>_{\mathbb R}.
\end{eqnarray*}
where ${\mathcal I}$ is the set of 4/indices.
\begin{prop}
The set $J^+ \cup J^- \cup {\mathcal J}$ diagonalizes simultaneously the adjoint action of $H$.
\end{prop}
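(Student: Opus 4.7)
The plan is to verify simultaneous diagonalization by analyzing the bracket $[D_\alpha,\cdot]$ on the two families $\mathcal J$ and $J^\pm$ separately, showing that every element either is itself a joint eigenvector of the commuting operators $\{\ad_{D_\alpha}\}_{\alpha=1}^{7}$ or spans, together with one other, a two-dimensional $\ad_H$-invariant subspace whose complexification splits into common eigenlines. Taking the complexification, $J^+\cup J^-\cup\mathcal J$ then becomes a basis of simultaneous eigenvectors, which is what ``diagonalizes simultaneously'' means here.

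For the $\mathcal J$ piece I would compute directly from \eqref{id}. The operator $\lambda_I$ sends $e_{j_1 j_2}$ to a multiple of $\varepsilon^{k_1 k_2}$ with $\{j_1,j_2,k_1,k_2\}=\tilde I$, the complement of $I$ in $\{1,\ldots,8\}$, while $D_\alpha$ acts diagonally on the basis of $W$ with eigenvalues $\pm i(D^i_\alpha+D^j_\alpha)$. The commutator evaluated on $e_{j_1 j_2}$ produces the scalar $-i(D^{j_1}_\alpha+D^{j_2}_\alpha+D^{k_1}_\alpha+D^{k_2}_\alpha)$, which by the tracelessness $\sum_i D^i_\alpha=0$ collapses to $i\sum_{i\in I}D^i_\alpha$; the identical scalar arises on $\varepsilon^{j_1 j_2}$. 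Hence
\begin{equation*}
[D_\alpha,\lambda_I] \,=\, i\sum_{i\in I}D^i_\alpha\, \lambda_I,
\end{equation*}
so every $\lambda_I$ is already a common eigenvector of $\ad_H$ with weight $i\sum_{i\in I}D^i_\alpha$.

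For the $J^\pm$ piece I would pass to the standard Cartan decomposition of $L\otimes\mathbb C=\sla(V)\otimes\mathbb C$. Writing $S_{kl}=E_{kl}+E_{lk}$ and $A_{kl}=E_{kl}-E_{lk}$ in terms of the elementary matrix units $E_{kl}$, a short manipulation gives
\begin{equation*}
J^+_{kl}+iJ^-_{kl} \,=\, \sqrt 2\,(1-i)\,E_{kl}, \qquad J^+_{kl}-iJ^-_{kl} \,=\, -\sqrt 2\,(1+i)\,E_{lk},
\end{equation*}
so the real two-plane $\langle J^+_{kl},J^-_{kl}\rangle_{\mathbb R}$ is the real form of the complex plane $\langle E_{kl},E_{lk}\rangle$. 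The classical identity $[D_\alpha,E_{kl}]=(D^k_\alpha-D^l_\alpha)E_{kl}$ then shows that this complexified plane splits as a direct sum of two $\ad_H$-eigenlines with weights $\pm(D^k_\alpha-D^l_\alpha)$, simultaneously for all $\alpha$.

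Combining the two, the union $J^+\cup J^-\cup\mathcal J$ has cardinality $28+28+70=126=\dim E_7-\dim H$ and, by the two computations above, its complexification supplies a basis of joint eigenvectors of the commuting family $\{\ad_{D_\alpha}\}$ on the $H$-complement in $E_7$. The only non-routine step is the eigenvalue computation for $\lambda_I$, whose key input is the tracelessness of each $D_\alpha$; the $J^\pm$ part is essentially the classical Cartan decomposition of $\sla(8)$ transported through the linear change of basis $\{A_{kl},S_{kl}\}\leftrightarrow\{J^+_{kl},J^-_{kl}\}$.
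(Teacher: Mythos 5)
Your computation for the $\mathcal J$ part coincides with the paper's: a direct evaluation of $[D_\alpha,\lambda_I]$ on the basis of $W$, with the tracelessness of $D_\alpha$ converting $-(D_\alpha^{j_1}+D_\alpha^{j_2}+D_\alpha^{k_1}+D_\alpha^{k_2})$ into $\sum_{i\in I}D_\alpha^i$. That half is fine.

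For the $J^\pm$ part there is a bookkeeping error that changes the conclusion. In this construction the symbol $S_{kl}$ appearing in $J^\pm_{kl}=\frac1{\sqrt2}(-iS_{kl}\pm A_{kl})$ is the \emph{compact-form} generator, i.e.\ it already carries the factor $i$ that was inserted into the symmetric matrices to pass to $E_{7(-133)}$ (its $56\times 56$ blocks are $\pm i$ times the natural derivation action of $E_{kl}+E_{lk}$). Consequently $-iS_{kl}+A_{kl}$ is the natural action of $(E_{kl}+E_{lk})+(E_{kl}-E_{lk})=2E_{kl}$ on $W$, so that $J^+_{kl}=\sqrt2\,\rho(E_{kl})$ and $J^-_{kl}=\sqrt2\,\rho(E_{lk})$ are, individually, root vectors, and $[D_\alpha,J^\pm_{kl}]=\pm i(D_\alpha^k-D_\alpha^l)J^\pm_{kl}$ (the eigenvalue is $i(D^k_\alpha-D^l_\alpha)$, not $D^k_\alpha-D^l_\alpha$, because $D_\alpha$ likewise carries the factor $i$). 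You instead treated $S_{kl}$ as the plain real symmetric matrix $E_{kl}+E_{lk}$, obtained $J^+_{kl}+iJ^-_{kl}\propto E_{kl}$, and concluded only that the real two-plane $\langle J^+_{kl},J^-_{kl}\rangle_{\rr}$ is $\ad_H$-invariant with diagonalizable complexification. Taken at face value that reading would make $J^\pm_{kl}$ \emph{not} eigenvectors of $\ad_{D_\alpha}$, i.e.\ it would contradict the proposition as stated, which asserts that the listed vectors themselves are simultaneous eigenvectors. The repair is one line — identify $J^\pm_{kl}$ with $\sqrt2\,\rho(E_{kl})$ and $\sqrt2\,\rho(E_{lk})$ as above — after which your route through the classical Cartan decomposition of $\sla(8)$ becomes equivalent to the paper's direct computation. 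The dimension count $28+28+70=126=\dim E_7-\dim H$ is a correct and welcome sanity check, though the paper does not need it.
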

\begin{proof}
By direct computation of the action of $[D_\alpha, J^\pm_{kl}]$ on $e_{ij}$ and on $\varepsilon^{ij}$ we get:
\begin{eqnarray}
[D_\alpha, J^\pm_{kl}]=\pm i(D_\alpha^k-D_\alpha^l) J^\pm_{kl},
\end{eqnarray}
where for ${\mathcal J}_{I}=\lambda_{i_1i_2i_3i_4}$, using $\tr D_\alpha=0$ we get
\begin{eqnarray}
[D_\alpha, {\mathcal J}_I]=i(D_\alpha^{i_1}+D_\alpha^{i_2}+D_\alpha^{i_3}+D_\alpha^{i_4}) {\mathcal J}_I.
\end{eqnarray}
\end{proof}
We now fix an explicit choice for a basis of $H$, suitable for our purposes:
\begin{eqnarray*}
&& D_1=\frac 1{\sqrt 2}\diag\{1,-1,-1,1,0,0,0,0\}, \qquad D_2=\frac 1{\sqrt 2}\diag\{1,-1,1,-1,0,0,0,0\},
\qquad D_3=\frac 1{\sqrt 2}\diag\{1,1,-1,-1,0,0,0,0\}, \\
&& D_4=\frac 1{\sqrt 2}\diag\{0,0,0,0,1,-1,-1,1\}, \qquad D_5=\frac 1{\sqrt 2}\diag\{0,0,0,0,1,-1,1,-1\},
\qquad D_6=\frac 1{\sqrt 2}\diag\{0,0,0,0,1,1,-1,-1\}, \\
&& D_7=\frac 12\diag\{1,1,1,1,-1,-1,-1,-1\}.
\end{eqnarray*}
Before continuing, let us recall the structure of the roots for the $E_7$ type algebras. Let $R^+$ the set of positive roots
and $L_i$ an orthonormal basis for $H^*_{\mathbb{R}}$ the real space spanned by all roots. Then, the positive roots system for an
$E_7$ Lie algebra is (see \cite{fulton-harris}, chapter 21, p.333)
\begin{eqnarray}
R^+=\{L_j+L_i\}_{i<j\leq6}\cup \{L_j+L_i\}_{i<j\leq6} \cup \{\sqrt 2 L_7 \}\cup\{\frac {\pm L_1 \pm \ldots \pm L_6
+\sqrt 2 L_7}2\}_{\mbox{odd number of sign -}} \label{roots}
\end{eqnarray}
and in particular a choice of simple roots is
\begin{eqnarray}
&& \alpha_1=\frac {L_1-L_2-L_3-L_4-L_5-L_6+\sqrt 2 L_7}2, \qquad \alpha_2=L_1+L_2, \qquad \alpha_3 =L_2-L_1, \cr
&& \alpha_4 =L_3-L_2, \qquad  \alpha_5 =L_4-L_3, \qquad \alpha_6 =L_5-L_4, \qquad \alpha_7 =L_6-L_5.
\end{eqnarray}
We normalized the basis for $H$ so that $D_\alpha \cdot D_\beta =2\delta_{\alpha\beta}$. An orthonormal basis for $H^*_{\mathbb{R}}$
is thus provided by $L_\alpha (H_\beta)=\sqrt{2} \delta_{\alpha\beta}$. Let us then introduce the subset ${\mathcal I}_0 \subset {\mathcal I}$
such that ${\mathcal I}={\mathcal I}_0 \cup \tilde{\mathcal I}_0$ defined as follows:
\begin{eqnarray}
{\mathcal I}_0=\{I\in {\mathcal I}: i_1,i_2,i_3 \in \{1,2,3,4,5\}\}.
\end{eqnarray}
In other words it is the set of ordered tetra-indices such that or all indices run from 1 to 5, or $i_7\in {6,7,8}$. Its cardinality is then
$\binom54+3\binom53=35$. Set
\begin{equation}
{\mathcal J}={\mathcal J}^+ \cup {\mathcal J}^-:=\{\lambda_I \in {\mathcal J}: I\in {\mathcal I}_0\}\cup
\{\lambda_I \in {\mathcal J}: \tilde I\in {\mathcal I}_0, \epsilon_{I,\tilde I}=1 \}.
\end{equation}
\begin{prop}
The set $J^+\cup {\mathcal J}^+$ consists of all eigenvectors associated to all positive roots of $\lie E_{7}$. The corresponding
roots are
\begin{eqnarray}
\{ \beta_{kl}:= \sum_{\alpha=1}^7 \frac {D_\alpha^k -D_\alpha^l}{\sqrt 2} L_\alpha\}_{k<l} \cup
\{ \beta_{i_1i_2i_3i_4}:=\sum_{\alpha=1}^7 \frac {D_\alpha^{i_1} +D_\alpha^{i_2}+D_\alpha^{i_3} +D_\alpha^{i_4}}{\sqrt 2} L_\alpha  \}_{i_1i_2i_3i_4\in
{\mathcal I}_0}.
\end{eqnarray}
In particular the simple roots are
\begin{eqnarray}
\alpha_1=\beta_{45}, \qquad \alpha_2= \beta_{12}, \qquad \alpha_3=\beta_{34}, \qquad \alpha_4=\beta_{23}, \qquad \alpha_5=\beta_{3458},
\qquad \alpha_6=\beta_{78}, \qquad \alpha_7=\beta_{67}.
\end{eqnarray}
\end{prop}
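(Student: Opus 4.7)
The plan is to compare the $63$ candidate root vectors in $J^+\cup\mathcal{J}^+$ against the standard $E_7$ positive-root list \eqref{roots}. Proposition 1 has already done the hard work—simultaneous diagonalization of $\mathrm{ad}_H$ together with explicit eigenvalues—so only a finite bookkeeping comparison remains.

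First I would verify the count: $|J^+|=\binom{8}{2}=28$ and $|\mathcal{I}_0|=\binom{5}{4}+3\binom{5}{3}=35$, summing to the required $63=(\dim E_7-\dim H)/2$. Translating the eigenvalues of Proposition 1 into the dual orthonormal basis $\{L_\alpha\}$ defined by $L_\alpha(D_\beta)=\sqrt 2\,\delta_{\alpha\beta}$, and extracting the imaginary unit characteristic of the compact realization, the root carried by $J^+_{kl}$ becomes $\beta_{kl}=\sum_\alpha\frac{D_\alpha^k-D_\alpha^l}{\sqrt 2}L_\alpha$ and the root carried by $\mathcal{J}_I$ becomes $\beta_I=\sum_\alpha\frac{D_\alpha^{i_1}+\cdots+D_\alpha^{i_4}}{\sqrt 2}L_\alpha$, exactly as stated.

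The main bookkeeping obstacle is matching these $63$ explicit vectors to \eqref{roots}. Here I would exploit the block structure of the chosen Cartan basis: $D_1,D_2,D_3$ are mutually orthogonal sign vectors of even parity supported on indices $1$--$4$, $D_4,D_5,D_6$ are the analogous vectors on indices $5$--$8$, and $D_7$ separates the two blocks with weight $\pm 1/2$. Sorting by $L_7$-coefficient yields a complete classification. Among the $\beta_{kl}$, the twelve with both indices in a single block have vanishing $L_7$-component and reproduce the positive roots of the two $D_3$ subsystems on $\langle L_1,L_2,L_3\rangle$ and $\langle L_4,L_5,L_6\rangle$; the sixteen with $k\le 4<l$ have $L_7$-coefficient $+1/\sqrt 2$ and realize sixteen ``spinor'' roots $(\pm L_1\pm\cdots\pm L_6+\sqrt 2 L_7)/2$ of odd parity (even first-block $\times$ odd second-block sign patterns). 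A parallel case analysis on $I\in\mathcal{I}_0$—split into the cases $I\subset\{1,\ldots,5\}$ versus $i_4\in\{6,7,8\}$, and further by whether $5\in\{i_1,i_2,i_3\}$—produces one long root $\sqrt 2 L_7$ (from $I=\{1,2,3,4\}$), sixteen further spinor roots completing the 32-element spinor family, and eighteen short roots $L_i\pm L_j$ with $i\in\{1,2,3\}$, $j\in\{4,5,6\}$. The totals $12+16+1+16+18=63$ exhaust \eqref{roots} without repetition; moreover, since every listed root has non-negative $L_7$-coefficient and the $L_7=0$ roots form the positive half of a $D_3\times D_3$ subsystem, there is a regular element $H_0\in H$ of the form $c_7 D_7+\varepsilon(D_1+\cdots+D_6)$ on which all of them are strictly positive, so the set is a bona fide positive system.

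Finally, I would confirm the simple-root identifications by direct substitution: $\beta_{45}=(L_1-L_2-L_3-L_4-L_5-L_6+\sqrt 2 L_7)/2=\alpha_1$ and $\beta_{3458}=L_4-L_3=\alpha_5$ (the two identifications not of the form $L_j\pm L_i$), together with the straightforward cases $\beta_{12}=\alpha_2$, $\beta_{34}=\alpha_3$, $\beta_{23}=\alpha_4$, $\beta_{78}=\alpha_6$, $\beta_{67}=\alpha_7$. This completes the proof.
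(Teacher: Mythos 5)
Your proposal is correct and is essentially the paper's own argument: the paper disposes of this proposition with the single line ``the proof is by direct inspection,'' and you have simply carried out that inspection explicitly --- counting $28+35=63$, reading each root off the eigenvalues of Proposition~1 via $L_\alpha(D_\beta)=\sqrt2\,\delta_{\alpha\beta}$, sorting by the $L_7$-coefficient, and matching the resulting $63$ vectors bijectively against the standard positive list \eqref{roots}, then substituting to identify the simple roots. One inessential slip in your auxiliary consistency check: the element $c_7D_7+\varepsilon(D_1+\cdots+D_6)$ is \emph{not} regular (it annihilates every $L_j-L_i$ with $i,j\le 6$), and the $L_7=0$ roots form the positive half of a $D_6$ subsystem rather than $D_3\times D_3$ once the eighteen roots $L_j\pm L_i$ with $i\in\{1,2,3\}$, $j\in\{4,5,6\}$ are included --- but this remark is redundant anyway, since the exact, repetition-free match with the known positive system \eqref{roots} already establishes that your set is a bona fide positive system.
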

The proof is by direct inspection. Note that the order for the simple roots in the theorem is the same as in Fig. \ref{fig}.
This result is very helpful for computing the function $f={\rm det} [\Pi \circ {\rm Ad}_{e^{-V}} : \mathfrak{u} \rightarrow \mathfrak {t} ]$.
Indeed, from both Propositions we see that a basis for $\mathfrak u$ is given by the matrices of the form
$$S=i\frac {J^++J^-}{\sqrt 2}, $$ whereas a basis for $\mathfrak t$ is given by the elements of the form
$$S=\frac {J^+-J^-}{\sqrt 2}.$$ Using this and
$$
{\rm Ad}_{e^{\sum_a y^a D_a}} J_\beta^\pm ={e^{\pm\sum_a y^a \beta(D_a)}} J_\beta^\pm
$$
for a given root $\beta$, one finally obtains
\begin{eqnarray}
  |f(\vec y)|=\prod_{\beta \in {\rm Rad}^+} \sin (|\sum_{a=1}^7 y^a \beta(D_a)| ),
\end{eqnarray}
where ${\rm Rad}^+$ is the set of positive roots w.r.t. $V$.

\section{Some details for the $E_{7(-5)}$ construction}\label{app:ultimissima/e/basta}
Here we will specify the subgroup $K$ of $U_5$ commuting with the torus $e^{H_4}$. To this end start we start by looking for the
subalgebra $\mathfrak k$ of $\mathfrak{u}_5$ commuting with $H_4$. This can be done by means of Mathematica and gives
\begin{eqnarray}
\mathfrak{k}=\langle M_1,\ldots, M_9 \rangle_{\mathbb R},
\end{eqnarray}
with
\begin{eqnarray}
&& M_1= \frac 12(L_1-L_9+L_{21}-L_{25}), \\
&& M_2= \frac 12(L_2+L_8+L_{14}-L_{28}), \\
&& M_3= \frac 12(L_3-L_7+L_{19}+L_{27}), \\
&& M_4= \frac 12(L_4-L_{13}+L_{20}+L_{23}), \\
&& M_5= \frac 12(L_5-L_{15}+L_{18}-L_{22}), \\
&& M_6= \frac 12(L_{10}-L_{12}+L_{16}+L_{24}), \\
&& M_7= \frac 1{\sqrt 2}(L_{45}+L_{46}), \\
&& M_8= L_{49}, \\
&& M_9= L_{50}.
\end{eqnarray}
These generate an algebra ${\rm so}(4) \oplus {\rm su}(2)$ whose exponentiation gives the group $K^0 ={\rm SO}(4) \times {\rm SU}(2)$.
A general analysis, which will be presented in \cite{to-appear}, shows that $K$ contains an extra $\mathbb Z_2^4$ factor so that
\begin{eqnarray}
K={\rm SO}(4) \times {\rm SU}(2)\times \mathbb Z_2^4.
\end{eqnarray}


\end{appendix}

\end{document}